\newtheorem{trd}{Theorem}
\theoremstyle{definition}
\newtheorem{defn}{Definition}
\newcommand{\RR}{\mathbb{R}}
\newcommand{\NN}{\mathbb{N}}
\newcommand{\NNN}{\overline \NN}
\newcommand{\RRR}{\overline {\RR_0^+}}
\newcommand{\ZZZ}{\overline {\mathbb Z}}
\newcommand{\RRRp}{\overline {\mathbb R}}
\newcommand{\network}[1]{\mathcal{#1}}
\newcommand{\vertices}[1]{\mathcal{#1}}
\newcommand{\edges}[1]{\mathcal{#1}}
\newcommand{\functions}[1]{\mathcal{#1}}
\newcommand{\Time}{\mathcal{T}}
\newcommand{\cmdkey}{\raisebox{-.025em}{\includegraphics[height=.7em]{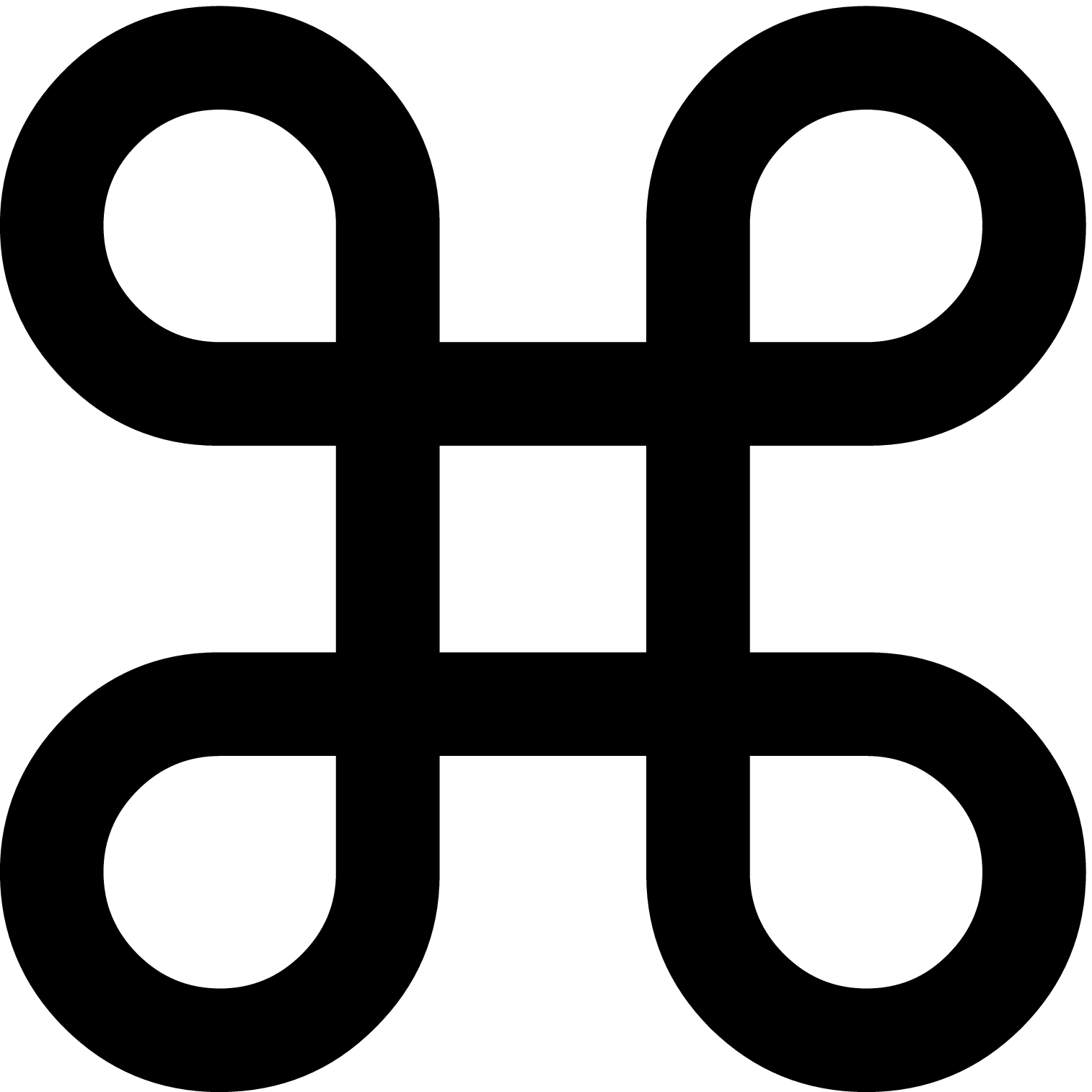}}}
\newcommand{\Mw}{\mathop{\raisebox{-1.5pt}{\mbox{$\Box$\kern-.55em\raisebox{2.5pt}{{\tiny $r$}}\kern2.9pt}}}}
\newcommand{\Mv}{\mathop{\raisebox{-1.5pt}{\mbox{$\Box$\kern-.55em\raisebox{2.5pt}{{\tiny $h$}}\kern2.9pt}}}}
\newcommand{\splus}{\mathop{\raisebox{-1.5pt}{\mbox{$\Box$\kern-.6em\raisebox{2.5pt}{{\tiny $+$}}\kern2.9pt}}}}
\newcommand{\sdot}{\mathop{\raisebox{-1.5pt}{\mbox{$\Box$\kern-.6em\raisebox{2.5pt}{{\tiny $\times$}}\kern2.9pt}}}}
\def\RR{{\mathbb R}}
    \def\Ddots{\mathinner{\mkern1mu\raise\p@
    \vbox{\kern7\p@\hbox{.}}\mkern2mu
    \raise4\p@\hbox{.}\mkern2mu\raise7\p@\hbox{.}\mkern1mu}}
\begin{document}

	\title{Semirings for temporal network analysis}
	

	\author{Selena Praprotnik \footnote{Faculty of Electrical Engineering, University of Ljubljana, Tržaška cesta 25, 1000 Ljubljana, Slovenia} \footnote{Corresponding author: selena.praprotnik@fe.uni-lj.si},
Vladimir Batagelj\footnote{IMFM, Jadranska 19, 1000 Ljubljana, Slovenia and 
			University of Primorska, IAM, Muzejski trg 2, 6000 Koper, Slovenia}}
	
	\maketitle
	
	\begin{abstract}
		{In the article, we describe a new algebraic approach to the temporal network analysis based on the notion of temporal quantities. We define  the semiring for computing the foremost journey and the traveling semirings for the analysis of temporal networks where the latency is given, the waiting times are arbitrary, and some other information on the links are known. We use the operations in the traveling semiring to compute a generalized temporal betweenness centrality of the nodes that corresponds to the importance of the nodes with respect to the ubiquitous foremost journeys in a temporal network.}\\
		Keywords: {Temporal quantity, temporal network, latency, semiring, centrality measure, betweenness.}
		\\
		2000 Math Subject Classification: 05C25, 
		68R10, 
		90B10, 
		91D30, 
		16Y60. 
	\end{abstract}

\section{Introduction}

Network analysis is used for different purposes in operations research, social sciences and many other scientific fields. A lot of research is done in communication networks, logistics,  and the internet. The interest in network analysis increased in recent times, mostly due to the availability of big data and the global interest in data analysis. The growth of the internet and the amount of information available gave rise to many methods for the analysis of big data and sparse networks. In the last decade, there has been growing interest in temporal networks -- the networks which change over time.

In a temporal network, the presence and the activity of nodes and links can change through time. Temporal data was added to networks in different scientific fields, for example transport systems \cite{Transp,Wardrop} and project management (CPM, Pert) in operations research \cite{Moder}.  An overview of temporal network analysis is given in \cite{TNsur, TNbook,HolmeRev}.

A lot of research is still confused with the terminology and the terms used in communication network analysis, transport networks, computer networks, etc.~that are similar or even the same, define the same phenomenon with different notation and different words. For example, temporal distance \cite{Xuan_JoFoCS2002}, reachability time \cite{Holme_PRE2005}, latency of the information, and other terms name the same thing in different areas. The same thing happens with journeys \cite{Xuan_JoFoCS2002} that are named temporal paths, time respecting paths or paths with schedules by other authors.

There is no established formal description of temporal networks. The common point of all current research is the time component and that the changes of the network are one of the key information about the network.

The beginnings of temporal network analysis are based on time slices of the network  \cite{slice}. The temporal network is represented as a sequence of static networks, representing the state of the temporal network at a chosen time point (interval).

Two different approaches aim to unify temporal networks theory in a way that could be used for all the different uses. One is the time-aggregated graph from \cite{series/sbcs/GeorgeK13}. The other is the time-varying graph from \cite{TVGsur}.

We feel that both descriptions lack the possibility of adding arbitrary information to the network nodes or links. They are both describing the presence with explicit functions which also seems too complex. In \cite{TQSON} and \cite{PraprotnikAMC}, we proposed a new way for the temporal network description which remedies both of these shortcomings.

In the article, we shortly explain our description of temporal networks and study the case of temporal networks that is an extension of static networks and of temporal networks with zero latency and zero waiting times described in our previous articles \cite{TQSON,PraprotnikAMC}. We define a mathematical model for the description of temporal networks that allows for the presence / activity of the nodes / links and for the node properties and the link weights to change through time. The amount of the information that can be described with our representation of temporal networks is not limited. We  construct semirings with operations that allow us to define and compute a simple node centrality measure in a temporal network.

Most of the static network analysis based on paths has been difficult to generalize to the case of temporal networks because of the obvious differences -- in static networks the shortest path always includes the shortest subpaths which is not true in temporal networks (we address this issue in more detail at the end of the article). Also, these measures cannot be generalized with the time slices approach as the temporal network can be disconnected at every time point and connected through time (think of the network of e-mail messages). The analysis of path based indices has to be done on dynamic networks that include the latency information.

For some special cases, there were steps taken to compute shortest, fastest, and foremost journeys \cite{Xuan_JoFoCS2002}. But the complexity of the standard problems of network analyisis can be a lot greater in temporal networks. For example, the problem of strongly connected components in temporal networks is NP complete \cite{SCC, journals/corr/abs-1106-2134}.

With this article, we make a step towards unifying temporal networks description and to adding information to the nodes and links of the temporal network. We also provide a way to combine different information in a useful manner. One such example is the generalization of the betweenness centrality.

In Section \ref{sec:definition}, we present some basic definitions and notation used in the rest of the paper.

In Section \ref{sec:semiring}, we define semirings and describe their use in network analysis. We give some examples that we need for the description and better understanding of temporal semirings.

In Section \ref{sec:temporalSemi}, we present the definitions of our new approach to the temporal network analysis. We introduce the notion of temporal quantities and the temporal semirings for the analysis of temporal networks with zero latency and zero waiting time. We introduce the semiring of increasing functions and explain how it is used in computing the foremost journeys -- we get the first arrival semiring. The traveling semirings take into account additional network information, besides the latency.

We describe the application of these semirings on some generic temporal networks. We used the Python library TQ that we started writing in our previous articles \cite{PraprotnikAMC,TQSON} and we extended it to include the operations in the first arrival semiring and in the traveling combinatorial semiring. We are also developing a user friendly program called Ianus for an easy access to the library options. The program and the library are freely available at\ \url{http://vladowiki.fmf.uni-lj.si/doku.php?id=tq}.

In Section \ref{sec:between}, we explain a possible use of the traveling semiring -- two generalizations of the betweenness centrality. We extended the library TQ so that it can be used to compute the first arrival betweenness and the first arrival betweenness with exclusion in any network described in Ianus format. We test the proposed centrality on a part of the bus schedule network of Ljubljana, Slovenia.

We conclude with directions for future work in Section \ref{sec:conclusion}. Our work opens a lot of different future research possibilities.

\section{Definitions and notation} \label{sec:definition}

\begin{defn} A \emph{graph} $\mathcal G$ is an ordered pair of sets $(\vertices{V},\edges{L}),$ the set $\mathcal V$ is the set of \emph{nodes} and the set  $\mathcal L$ is the set of \emph{links} between nodes. The links between the nodes $u$ and $v$ can be \emph{directed (arcs)} $(u,v)$ or \emph{undirected (edges)} $\{u,v\}.$ With $\ell(u,v)$ we tell that the link $\ell$ goes from $u$ to $v.$ If for an arc $\ell$ it holds $\ell(u,v)$  we say that $\ell$ starts at $u$ and ends at $v.$ 
\end{defn}

With $n$ we denote the number of nodes $|\mathcal V|$ and with $m$ the number of links $|\mathcal L|$. We assume that  $n$ and $m$ are finite.  

\begin{defn} 
	A \emph{network} $\mathcal N = (\mathcal V, \mathcal L, \mathcal P, \mathcal W)$ consists of the graph $\mathcal G = (\mathcal V, \mathcal L)$ with additional information about the values (\emph{weights}) of links $\mathcal W$ and  the values (\emph{properties}) of the nodes $\mathcal P.$
\end{defn}

\begin{defn}
	A \emph{walk} in a graph $\mathcal G$ with a \emph{start} at the node $v_0$ and an \emph{end } at the node $v_p$ is a finite alternating sequence of nodes and links $$\pi = v_0\ell_1v_1\ell_2v_2\dots \ell_pv_p$$ iff $\ell_i (v_{i-1},v_i), \; i = 1,2,\dots,p.$ The \emph{length} of a walk is the number $p$ of links it contains. The sequence $\pi$ is a \emph{semiwalk} iff the direction of the links is not important, that is $\ell_i(v_{i-1},v_i)$ or $\ell_i(v_i,v_{i-1})$ for all $i = 1,2,\dots,p.$ A walk is \emph{closed} iff it starts and ends at the same node, $v_0 = v_p.$ A walk without repeating nodes is  an \emph{elementary walk} or a \emph{path}.
\end{defn}

\begin{defn} A  \emph{value matrix} $\mathbf{A}$ of a network $\mathcal N=(\mathcal V,\mathcal L,w)$ is defined as
	$$\mathbf{A}=[a_{uv}]_{u,v\in \mathcal V} = \left\{\begin{array}{l l}
	w(u,v), & (u,v) \in \mathcal L,\\
	0, &  \mbox{otherwise},
	\end{array}\right.$$
	where $w(u,v)$ denotes the weight associated with the link $(u,v).$
\end{defn}

In our notation $0 \in \NN.$ We denote $\NNN = \NN \cup \{\infty \},$ $\ZZZ = \mathbb Z \cup \{\pm \infty \},$ $\RRRp = \RR\cup \{\pm\infty \}$  and $\RRR = \RR_0^+\cup \{\infty \}.$

\section{Semirings} \label{sec:semiring}

Semirings are frequently used in network analysis \cite{doi:10.2200/S00245ED1V01Y201001CNT003, GaN, Dolan:2013:FSF:2544174.2500613, Gondran:2008:GDS:1386688, Mohri:2002:SFA:639508.639512, zimmerman}. In this section, we describe  semirings that are used most frequently and are later generalized for the analysis of temporal networks.

\begin{defn} Let $a,b,c \in A.$ The set $A$ with binary operations addition $\oplus$ and multiplication $\odot,$ neutral element $0$ and unit $1,$ denoted with $A(\oplus, \odot, 0, 1),$  is a \emph{semiring}, when the following conditions hold:
\begin{itemize}
	\item the set $A$ is a commutative monoid for the addition $\oplus$ with a neutral element 0 (the addition is commutative, associative and  $a \oplus 0  = a$ for all  $a \in A$);
	\item the set $A$ is a monoid for the multiplication $\odot$ with the unit $1$ (the multiplication is associative and $a \odot 1 = 1 \odot a = a$ for all  $a \in A$);
	\item the addition distributes over the multiplication
	\begin{align*}
	a \odot (b \oplus c) &= (a \odot b) \oplus (a \odot c) \quad \mbox{ and } \quad
	(a \oplus b) \odot c = (a \odot c) \oplus (b \odot c);
	\end{align*}
	\item the element 0 is an absorbing element or zero for the multiplication
	$$a \odot 0 = 0 \odot a = 0 \mbox{ for all } a \in A.$$
\end{itemize}
\end{defn}

In all cases, we assume precedence of the multiplication over the addition. The last point in the definition of semirings is omitted by some authors. We need it in order to construct a matrix semiring over the semiring $A.$ If all the points in the definition, except for the last one, hold for a given set $A,$ it can be extended with the element $\cmdkey,$ for which by definition
\[ a \oplus \cmdkey = \cmdkey \oplus a = a \quad \mbox{and} \quad
a \odot \cmdkey = \cmdkey \odot a = \cmdkey  \]
holds for all $a \in A \cup \{\cmdkey\}.$ In the extended set $A_{\scriptsize\cmdkey} = A \cup \{\cmdkey\}$  the element $\cmdkey$ is a zero by the definition and $(A_{\scriptsize\cmdkey},\oplus,\odot,\cmdkey,1)$ is a semiring.

\begin{defn}
	A semiring is \emph{complete} iff the addition is well defined for countable sets and the distributivity laws still hold.
	\end{defn}
	
\begin{defn}
	The addition is \emph{idempotent} iff $a\oplus a=a$ for all $a \in A$.
\end{defn}

\begin{defn}
	A complete semiring $(A,\oplus, \odot, 0,1)$ is \emph{closed} iff an additional unary operation \emph{closure} $\star$ is defined in it and
	$$a ^\star = 1 \oplus (a \odot a^\star) = 1 \oplus (a^\star \odot a) \mbox{ for all } a \in A.$$  
	We define a \emph{strict closure} $\overline{a}$ in a closed semiring as
	$$\overline{a} = a \odot a^\star.$$
\end{defn}

There can be different closures in the same semiring. A complete semiring is closed when the closure is defined with
	\begin{equation} \label{eq:zaprtje}
	a^\star = \bigoplus_{k \geq 0} a^k.
	\end{equation}
In the rest of the article the term closure describes the operation from the equation \eqref{eq:zaprtje}.

\begin{defn}
A  semiring $(A,\oplus, \odot, 0,1)$ is \emph{absorptive}  iff for every
 $a,b,c \in A$ it holds
$$(a \odot b) \oplus (a \odot c \odot b) = a \odot b.$$
\end{defn}

Because of the distributivity and the existence of the unit, it is enough to check that $1 \oplus c = 1$ for every $c \in A$  for the validity of the absorption law.  In absorptive semirings also $a^\star = 1$ for all $a \in A.$ An absorptive semiring  is idempotent.

\begin{defn}
Over the semiring $(A,\oplus,\odot,0,1),$ we construct the \emph{semiring of square matrices} $A^{n\times n}$ of order $n$ which consist of the elements from $A.$  The addition and the multiplication in the matrix semiring are defined in the usual way:
\begin{align*}
(\mathbf{A} \oplus \mathbf{B})_{ij} &= a_{ij} \oplus b_{ij} \quad \mbox{ and } \quad
(\mathbf{A} \odot \mathbf{B})_{ij} = \bigoplus_{k=1}^{n}a_{ik}\odot b_{kj}, \quad i,j = 1,2, \dots, n.
\end{align*}
\end{defn}

Note that the operations on the left hand side operate in the matrix semiring $A^{n \times n}$ and the operations on the right hand side operate in the underlying semiring $A.$

For computing the closure $\mathbf{A}^\star$ of the network value matrix $\mathbf{A}$ over a complete semiring $(A,\oplus,\odot,0,1),$ the Fletcher's algorithm can be used. It is described in \cite{Fletcher:1980:MGA:358876.358884}.

\subsection{The use of semirings in network analysis}

In network analysis, semirings are used to combine weights on the links of the network. Combining the weights, we can observe different network properties. There are two basic cases -- combining weights of two parallel links between two nodes or the weights of two sequential links between three nodes. The weights on the parallel links are combined using the semiring addition and the weights on the sequential links are combined using the semiring multiplication. A graphical representation is given in Figure \ref{fig:semiG}. Using the semiring operations, the weights of links can be extended to walks and to sets of walks in the network \cite{Batagelj94semiringsfor}.

\begin{figure}[!ht] 
	\begin{center}
\includegraphics{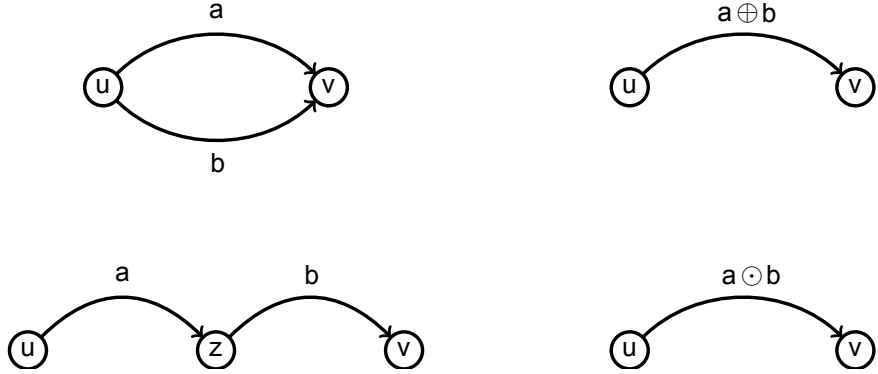}
	\end{center}
	\caption{The semiring addition and the semiring multiplication in networks.}
	\label{fig:semiG}
\end{figure}

\subsubsection{Combinatorial semiring}

The combinatorial semiring is the semiring of the natural numbers for the usual addition and multiplication $(\NNN,+,\cdot,0,1).$ In some cases other number sets are used, for example  $\RRR.$ This semiring is complete and closed for $a^\star = \sum_{k\geq 0} a^k.$ It is not absorptive and the addition is not idempotent.

In network analysis, the combinatorial semiring is used when the weights of links represent the number of ways to traverse them.
The semiring addition and multiplication correspond to the rule of sum and the rule of product used in combinatorics \cite{comb}.

\subsubsection{Shortest paths semiring}

The shortest paths semiring is defined as $(\RRR, \min,+,\infty,0).$ It is complete, commutative (i.e.~also the semiring multiplication is commutative), and absorptive. It is closed and  $a^\star = \min\{0,a+a^\star\} = 0$ for all $a \in \RRR.$
If the set $\NNN$ is used instead of $\RRR$, the semiring is called \emph{tropical}.

The shortest paths semiring is used in the classical shortest paths problem:

A network $\mathcal N=(\mathcal V,\mathcal L,w)$ with weights on links $w:\mathcal L\to \RRR$ and a (source) node $s\in \mathcal V$ are given. The value $w(u,v)$ represents the length of the link from  $u$ to $v.$ We would like to compute all  lengths of the shortest paths from $s$ to other nodes  $v\in \mathcal V\setminus \{s\}.$ The usual solution is using dynamic programming: Define $d(s) = 0$ and compute the distances to other nodes $v\in \mathcal V \setminus \{s\}$ using Bellman's equation
\begin{equation} \label{eq:BellmanFord}
d(v) = \min_{u \in \mathcal V} \{d(u)+w(u,v)\}.
\end{equation} 

\subsubsection{Geodetic semiring} \label{sec:geodez}

In a set $A = \RRR \times \NNN$ the addition
$$(a,i)\oplus (b,j) = \left(\min (a,b), \left\{\begin{array}{l l}
i,  & a < b\\
i+j, & a = b \\
j, & a>b
\end{array}\right. \right)$$
and the multiplication
$$ (a,i)\odot (b,j) = (a+b,i \cdot j)$$
are defined. For these operations $\big(A,\oplus,\odot,(\infty,0),(0,1)\big)$ is a complete closed semiring \cite{Batagelj94semiringsfor} for the closure
$$(a,i)^\star = \left\{\begin{array}{l l}
(0,\infty), & a = 0, i \neq 0,\\
(0,1), &  \mbox{otherwise}.
\end{array}\right.$$
It is called a geodetic semiring. It is not idempotent.

The geodetic semiring is a combination of the shortest paths semiring and the combinatorial semiring. It is used to compute the length and the number of the shortest paths between pairs of nodes.

\section{Semirings for temporal networks} \label{sec:temporalSemi}

\begin{defn} A \emph{temporal network} $\network{N} =(\vertices{V},\edges{L}, \Time, \functions{P},\functions{W})$
is an ordinary (static) network $(\vertices{V},\edges{L},\functions{P},\functions{W})$ with an added time dimension $\Time$. Similarly to static networks, $\mathcal G ( \mathcal V, \mathcal L,\Time)$ is a graph which can now change through time. In temporal networks, $\mathcal W$ are temporal weights on the links, and $\mathcal P$ are temporal properties of the nodes. The set $\Time$ of time points $t \in \Time$ is a lifetime of the network. The lifetime $\Time$ is usually a subset of integers  $\Time \subseteq \ZZZ$ or a subset of reals $\Time \subseteq \RRRp$. In general, a linearly ordered set is sufficient. In the following, we use $\Time$ as a semiring with operations $\oplus = \min$ and $\odot = +.$
\end{defn}

For the operations on temporal networks with zero latency, described in our articles \cite{TQSON,PraprotnikAMC}, we assumed $\Time \subseteq \NNN$.

In a temporal network, the nodes $v \in \vertices{V}$ and the links $\ell \in \edges{L}$ are not necessarily present or active all the time. Let $T(v), \; T \in \functions{P},$ be the set of time points in which the node $v$ is present; and let $T(\ell), \; T \in \functions{W},$ be the set of time points in which the link $\ell$ is active. We require that the following \emph{consistency condition} holds: If a link  $\ell(u,v)$ is active at the time $t$  its end nodes  $u$ and $v$ must be present at the time $t.$ Formally,
\begin{equation}\label{eq:dosl}
T(\ell(u,v)) \subseteq T(u) \cap T(v) .
\end{equation}

\begin{defn}
The static network consisting of links and nodes present in a temporal network at the time $t \in \Time$ is denoted with $\network{N}(t)$ and is called a \emph{time slice} of the temporal network at the time $t.$ 

Let $\Time' \subset \Time$. Time slices are extended to the set $\Time'$ as
\[ \network{N}(\Time') = \bigcup_{t\in \Time'} \network{N}(t) . \]
\end{defn}

When we are interested in walks in temporal networks, there are usually additional information on the links of the network. 

\begin{defn}
	The \emph{latency}  $\tau \in \functions{W}, \; \tau : \edges{L} \times \Time \to \RRR.$ The value of $\tau(\ell,t)$ represents the time needed to traverse the link $\ell$ if the transition is started at the time $t.$ If the latency  $\tau$ is omitted, we  assume $\tau(\ell,t)=0$ for all $\ell \in \edges{L}$ and for all $t \in \Time.$ 
	\end{defn}
	
\begin{defn}
The \emph{weight}  $w \in  \functions{W}, \; w : \edges{L} \times \Time \to \RRRp,$ with values $w(\ell,t)$ representing length, cost, flow, etc.~on the link $\ell$ if the transition is started at the time $t.$  If the weight  $w$ is omitted, we  assume $w(\ell,t)=1$ for all links $\ell\in \mathcal L$ and all times $t\in \Time.$ In some cases the weights are structured.
\end{defn}

\begin{defn}
	A walk in a temporal network is called a \emph{journey}. The journey $\sigma(v_0,v_k,t_0)$ from the start node $v_0$ to the end node $v_k$ with the begining $t_0$ is a finite sequence
$$(t_0,v_0,(t_1,\ell_1),v_1,(t_2,\ell_2),v_2,\dots,v_{k-2},(t_{k-1},\ell_{k-1}),v_{k-1},(t_k,\ell_k),v_k),$$
where $v_i \in \mathcal V, i=0,1,\dots,k,$ and $\ell_i \in \edges{L}, t_i \in \Time, i = 1,2,\dots,k.$ The links have to link the appropriate nodes, $\ell_i(v_{i-1},v_i).$ The triples $v_{i-1},(t_i,\ell_i)$ tell that we started from the node $v_{i-1}$ at the time $t_i$ along the link $\ell_i$.

We denote $t_0' = t_0,\; t_i' = t_i + \tau(\ell_i,t_i), \; i = 1,2,\dots,k.$ These are the times when we arrive at the next node. For a journey  $t_{i-1}' \leq t_i$ has to hold and the link $\ell_i$ has to be present in the time interval $[t_i,t_i']$ for all $i = 1,2,\dots,k.$ Also, the node $v_0$ has to be present at the time $t_0.$
\end{defn}

Note that by the consistency condition it also holds that the nodes $v_{i-1}$ and $v_{i}$ are present in the time interval $[t_i,t_i']$.

\begin{defn}
	A journey is \emph{regular} if the node $v_i$ is present while waiting in the node for the next transition, that is during the time interval  $[t_{i}',t_{i+1}],\; i=0,1,\dots,k-1.$ 
\end{defn}

\begin{defn}
	A journey $\sigma$ has a (graph) \emph{length} equal to the number $k$ of links it contains, $|\sigma| = k.$ The \emph{duration} of the journey is equal to $t(\sigma) = t_k'-t_0$ and the  \emph{value} of the journey is equal to
$$w(\sigma) = w(\ell_1,t_1) \odot w(\ell_2,t_2) \odot \cdots \odot w(\ell_k,t_k) = \bigodot_{(t,\ell) \in \sigma} w(\ell,t)$$
for the multiplication in the appropriate semiring. 
\end{defn}

\begin{defn}The time $t_0$ is the \emph{begining} of the journey, the time $t_1$ is the \emph{departure} and $t_k'$ is the \emph{arrival} (end of the journey). The time $t_k'-t_1$ is called a \emph{strict duration} of the journey. Times $t_{i+1}-t_{i}',\; i = 0,1,\dots,k-1,$ are the \emph{waiting times} of the journey.
\end{defn}

\begin{defn}
A \emph{jump} is a journey inside a given network time slice $\network{N}(t).$ Jumps have zero latency and zero waiting times.
\end{defn}

\begin{defn}
	The \emph{fastest} journey is the one with the smallest strict duration. The \emph{foremost} journey is the one with the smallest arrival time. The \emph{cheapest} journey is the one with the smallest value.
\end{defn}

\begin{defn}
	A part of the journey $\sigma(v_0,v_k,t_0)$ from the node $v_i$ to the node $v_j$ with the beginning at the time $t_i,$ 
	$$(t_i,v_i,(t_{i+1},\ell_{i+1}),v_{i+1},\dots,v_{j-1},(t_j,\ell_j),v_j),$$ is caled a \emph{stage} of the journey. An \emph{ubiquitous foremost journey} is the foremost journey for which every stage is a foremost journey between the nodes $v_i$ and $v_j$ with the beginning $t_i.$
\end{defn}

It has been shown in \cite{Xuan_JoFoCS2002} that if there exists a journey between two nodes, then the ubiquitous foremost journey exists between them.

\subsection{Temporal quantities} \label{sec:TQ}

In temporal networks besides the presence or absence of nodes and links, also the values of node and link properties change through time. For the description of the temporal properties, we introduced \emph{temporal quantities} in \cite{TQSON}. Let $a(t)$ be the value of the property $a$ at the time $t.$ We assume that the values $a(t)$ of the function $a$ belong to the semiring $(A,\oplus,\odot,0,1).$ The node or the link that $a$ is describing is not necessarily present at all times. Therefore the function $a$ is not defined for all values $t \in \Time.$ 
\begin{defn}
Let $(A,\oplus,\odot,0,1)$ be a semiring and let the function $a: T_a \to A$ describe a temporal property in a temporal network. A	\emph{temporal quantity} $\hat a: \Time \to A$  is an extension of the function  $a,$
\[  \hat{a}(t) = \left\{\begin{array}{ll} 
a(t), & t \in T_a, \\
0, & t \in \Time \setminus T_a.
\end{array}\right. \]
\end{defn}

Note that the values of temporal quantities while the node or the link is not present are defined as the zero of the semiring $A.$ This means that the values along the sequential links are equal to 0 (describing nonexistence) if one of the sequential links does not exist.

In the rest of the article, we denote temporal quantities with $a$ instead of $\hat a.$

\subsection{Temporal semirings}

In this section, the latency and the waiting times in the temporal network are equal to zero. We described the temporal semirings in more detail and provided algorithmic support in our articles \cite{TQSON,PraprotnikAMC}.

\begin{defn}\label{def:opCasovniTemp}
Let $A_\Time$ be a set of all temporal quantities over the chosen semiring $(A,\oplus,\odot,0,1)$ for the lifetime  $\Time,$ that is $A_\Time = \{a:\Time \to A\}.$ In the set $A_\Time,$ we define the \emph{addition}
\[ (a \oplus b)(t) = a(t) \oplus b(t) \]
and the \emph{multiplication}
\[ (a \odot b)(t) = a(t) \odot b(t). \]
\end{defn}

The operations on the left hand side operate in the set $A_\Time$ of temporal quantities over the semiring $A$ for the lifetime $\Time,$ and the operations on the right hand side operate in the semiring $A.$

\begin{trd}
The set $A_\Time$ for the operations from the definition \ref{def:opCasovniTemp} is a semiring with the zero $0(t) = 0,\; t \in \Time,$ and the unit $1(t) = 1,\; t \in \Time.$ 
\end{trd}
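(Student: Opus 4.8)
The plan is to exploit the fact that both operations in Definition \ref{def:opCasovniTemp} are defined \emph{pointwise}: for every $t \in \Time$ the values $(a \oplus b)(t)$ and $(a \odot b)(t)$ are computed by applying the corresponding operations of the underlying semiring $A$ to the elements $a(t), b(t) \in A$. Consequently, each of the semiring axioms for $A_\Time$ reduces, after fixing an arbitrary $t \in \Time$, to the very same axiom in $A$, which holds by assumption. Two functions in $A_\Time$ are equal precisely when they agree at every $t$, so verifying an identity pointwise verifies it in $A_\Time$. First I would record that the constant functions $0(t) = 0$ and $1(t) = 1$ indeed belong to $A_\Time = \{a : \Time \to A\}$, so that the claimed neutral element and unit are legitimate elements of the set.

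Next I would check that $(A_\Time, \oplus)$ is a commutative monoid. For commutativity and associativity, fix $t$ and compute, for instance, $(a \oplus b)(t) = a(t) \oplus b(t) = b(t) \oplus a(t) = (b \oplus a)(t)$; since this holds for every $t$, we obtain $a \oplus b = b \oplus a$, and associativity follows the same way. For the neutral element, $(a \oplus 0)(t) = a(t) \oplus 0(t) = a(t) \oplus 0 = a(t)$ for all $t$, hence $a \oplus 0 = a$. An identical pointwise argument shows that $(A_\Time, \odot)$ is a monoid with unit $1$, using associativity of $\odot$ in $A$ and the identity $a(t) \odot 1 = 1 \odot a(t) = a(t)$.

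Then I would verify the two distributive laws and the absorption property, again pointwise: for distributivity, $\big(a \odot (b \oplus c)\big)(t) = a(t) \odot \big(b(t) \oplus c(t)\big) = \big(a(t) \odot b(t)\big) \oplus \big(a(t) \odot c(t)\big) = \big((a \odot b) \oplus (a \odot c)\big)(t)$, with the right-distributive law handled symmetrically; for absorption, $(a \odot 0)(t) = a(t) \odot 0(t) = a(t) \odot 0 = 0 = 0(t)$, and similarly $0 \odot a = 0$, so the constant function $0$ is a zero for the multiplication in $A_\Time$.

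I do not expect any genuine obstacle here, since the argument is a routine transfer of structure through pointwise operations. The only point requiring a little care is the bookkeeping, namely making sure that every clause of the semiring definition is matched against its counterpart in $A$, together with the initial observation that equality in $A_\Time$ is exactly pointwise equality, which is precisely what licenses the whole reduction.
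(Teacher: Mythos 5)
Your proposal is correct and follows exactly the same route as the paper's own (very terse) proof: the operations are pointwise, so every semiring axiom in $A_\Time$ reduces, at each fixed $t \in \Time$, to the corresponding axiom in $A$. You simply spell out the bookkeeping that the paper leaves implicit; there is no substantive difference.
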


\begin{proof}
The operations are defined pointwise and the semiring properties in  $A_\Time$ follow from the properties of the semiring $A.$
\end{proof}

\begin{defn} \label{def:casovni}
Let $A$ be a combinatorial (shortest paths, geodetic, etc.) semiring. The semiring $A_\Time$ is called a \emph{temporal combinatorial (shortest paths, geodetic, etc.) semiring}. 
\end{defn}

We can construct a matrix semiring over the temporal semirings. Such matrices can be used to describe temporal networks. Because the values of  $a(t)$ and $b(t)$ in the definition \ref{def:opCasovniTemp} correspond to the same time point $t,$ the latency and the waiting times are restricted to zero for the whole lifetime. The use of this semiring in temporal networks is restricted to jumps and not to arbitrary journeys for the operations to make sense.

\subsection{Semiring of increasing functions}\label{sec:monotonefje}

\begin{defn}
	A function $f$ is  \emph{increasing} iff $f(x) \geq f(y)$ for all $x,y$ of its domain for which $x \geq y.$ We say that a function $f$ is \emph{expanding} if $f(x) \geq x$ for all $x$ of its domain.
\end{defn}

\begin{trd}\label{trd:monotone}
The set 
$$A = \left\{\begin{array}{c}
 	f:\NNN \to \NNN; \mbox{ function }  f \mbox{ is increasing and expanding}\end{array}\right\}$$
 is a semiring for the operations
\begin{align*}
	f \oplus g &= \min (f,g) \quad \mbox{ and } \quad
	f \odot g = g \circ f.
\end{align*}
The zero is a function $f \equiv \infty$ and the unit is the identity function $f = id.$ For the domain or codomain of functions $f$ we could also choose  the sets $\ZZZ, \; \RRR,$ or $\RRRp.$
\end{trd}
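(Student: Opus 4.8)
The goal is to verify the semiring axioms for the set $A$ of increasing, expanding functions $\NNN \to \NNN$ under $f \oplus g = \min(f,g)$ (pointwise minimum) and $f \odot g = g \circ f$ (composition in the given order). My plan is to proceed in three phases: first confirm that $A$ is \emph{closed} under both operations (so that the operations are well-defined on $A$), then check the additive monoid and multiplicative monoid structures, and finally verify distributivity and the absorbing-zero property. I would assume throughout that ``increasing'' means monotone nondecreasing and ``expanding'' means $f(x)\geq x$, as defined just above the statement.

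\textbf{Closure of the operations.}
The first step is to show $f\oplus g$ and $f\odot g$ stay in $A$; this is where I expect the real work to be, since closure is the nonroutine part. For $\min(f,g)$: if $x\geq y$ then $f(x)\geq f(y)$ and $g(x)\geq g(y)$, and taking the minimum of two larger quantities is at least the minimum of the two smaller ones, so $\min(f,g)$ is increasing; and since both $f(x)\geq x$ and $g(x)\geq x$, their minimum is also $\geq x$, so it is expanding. For $g\circ f$: monotonicity is preserved because the composition of two increasing functions is increasing, and the expanding property follows from the chain $g(f(x))\geq f(x)\geq x$, using that $g$ is expanding (applied at the point $f(x)$) and then that $f$ is expanding. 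I would emphasize that the expanding hypothesis on \emph{both} factors is exactly what makes the composite expanding, which is why the definition insists on it.

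\textbf{Monoid laws and the identities.}
Next I would verify the two monoid structures. The pointwise minimum is commutative and associative because these hold pointwise in the linearly ordered set $\NNN$, and $f\oplus(f\equiv\infty)=f$ since $\min(f(x),\infty)=f(x)$; I should note that $f\equiv\infty$ genuinely lies in $A$ (it is constant hence increasing, and $\infty\geq x$ for all $x\in\NNN$, so it is expanding). For multiplication, associativity is just associativity of function composition, and $id$ is a two-sided unit since $f\circ id = id\circ f = f$; the identity is increasing and satisfies $id(x)=x\geq x$, so $id\in A$. Here one must be careful about the \emph{direction}: because $f\odot g=g\circ f$, the statement $a\odot 1 = 1\odot a = a$ unwinds to $1\circ a = a\circ 1 = a$, which holds, so the convention is consistent.

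\textbf{Distributivity and absorbing zero.}
Finally I would check the two distributive laws and the zero-absorption law. Left distributivity $a\odot(b\oplus c)=(a\odot b)\oplus(a\odot c)$ becomes, after translating composition order, $(\min(b,c))\circ a = \min(b\circ a,\, c\circ a)$, i.e. $\min(b(a(x)),c(a(x)))=\min\bigl((b\circ a)(x),(c\circ a)(x)\bigr)$, which is immediate. Right distributivity $(a\oplus b)\odot c=(a\odot c)\oplus(b\odot c)$ becomes $c\circ\min(a,b)=\min(c\circ a,\,c\circ b)$; this is the one genuinely nontrivial identity, and it relies on $c$ being \emph{increasing}: since $c$ is monotone, $c\bigl(\min(a(x),b(x))\bigr)=\min\bigl(c(a(x)),c(b(x))\bigr)$, because a monotone map sends the smaller of two inputs to the smaller of the two outputs. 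This is precisely the step I expect to be the crux of the proof, as it is the only place where monotonicity of the left factor is indispensable. For the absorbing zero, $a\odot(f\equiv\infty)=(f\equiv\infty)\circ a\equiv\infty$ and $(f\equiv\infty)\odot a=a\circ(f\equiv\infty)\equiv a(\infty)$, which forces adopting the convention $a(\infty)=\infty$ for $a\in A$ (consistent with $a$ being expanding on $\NNN$); once that is fixed, both products equal $f\equiv\infty$, establishing the zero property. The remaining alternative domains $\ZZZ,\RRR,\RRRp$ require no new argument, since every step used only linear order, monotonicity, and the expanding inequality.
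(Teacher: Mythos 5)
Your proof is correct, and it takes a genuinely different route from the paper: the paper's entire proof is a single sentence referring the reader to Gondran and Minoux (2008, Section 4.2.1, p.~346), where a closely related semiring of functions is constructed, so all verification is delegated to the literature. You instead check the axioms directly, and your write-up correctly isolates the two places where the hypotheses defining $A$ actually matter: right distributivity, which reduces to $c\circ\min(a,b)=\min(c\circ a,\,c\circ b)$ and holds precisely because $c$ is increasing (left distributivity, as you note, is a pointwise triviality valid for arbitrary functions), and closure under $\odot$ together with one side of the absorption law, which is where the expanding property enters via $g(f(x))\geq f(x)\geq x$ and $a(\infty)=\infty$. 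One small correction to that last step: you present $a(\infty)=\infty$ as a ``convention'' to be adopted, but it is forced rather than chosen --- $\infty$ belongs to the domain $\NNN=\NN\cup\{\infty\}$, so expanding at $x=\infty$ gives $a(\infty)\geq\infty$, hence $a(\infty)=\infty$; this observation also explains why the admissible domains and codomains listed in the statement ($\ZZZ$, $\RRR$, $\RRRp$) all contain a top element, which is needed for the zero $f\equiv\infty$ to exist in $A$ at all. What the paper's citation buys is brevity and a pointer to the general dioid framework; what your verification buys is a self-contained argument that makes explicit which axiom consumes which hypothesis (increasing for right distributivity, expanding for multiplicative closure and absorption), information the citation leaves hidden.
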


\begin{proof}
This semiring is very similar to the semiring from \cite[p. 346, Section 4.2.1]{Gondran:2008:GDS:1386688}.
\end{proof}

\begin{defn}
	The semiring $A$ from theorem \ref{trd:monotone} is called the \emph{semiring of increasing functions.}
\end{defn}

The semiring of increasing functions is complete, idempotent ($\min(f,f) = f$), closed for $f^\star = 1 \oplus f \odot f^\star = \min(id,f^\star\circ f) = id,$ and absorptive  ($\min(id,f) = id$) because $f^\star$ and $f$ are increasing and expanding functions. 

\subsection{First arrival semiring}

We start with an equation, similar to Bellman's equation \eqref{eq:BellmanFord}, for  finding the foremost journeys in a temporal network.

Let a temporal quantity $a_{uv}$ describe the latency along the link $(u,v)$ and let $T(u,v,t_0)$ be the first possible time at which we can arrive at the node $v$ if we start at the node $u$ at the time $t_0.$
Then
$$T(u,u,t_0) = t_0$$
and
\begin{equation}
\label{eq:mincas}
T(u,v,t_0) = \min _{w: (w,v)\in \mathcal L} \left(\min_{t \geq T(u,w,t_0)} (t+a_{wv}(t))\right).
\end{equation}
If we are interested in the duration, we subtract the begining  $t_0$ from the result.

We would like to construct a semiring that gives us this equation, similarly to the way that the shortest paths semiring gives Bellman's equation. The semiring operations are not obvious, as there are three operations (two minimums and the addition) in equation \eqref{eq:mincas}.

What we can see is that it is useful to define a function (temporal quantity) that tells the first arrival time for  the given start, end, and begining of the journey.

From the network interpretation, we can see what the appropriate semiring addition and multiplication are:

Let our journey take two sequential links  $(u,w)$ and $(w,v).$ The first arrival time at the node  $w$ along the link $(u,w)$ is described with the temporal quantity  $f,$ and the first arrival at the node  $v$ along the link $(w,v)$ is described with the temporal quantity $g.$ The corresponding journey is outlined in Figure \ref{fig:minCasKol}.

\begin{figure}[!ht]
	\begin{center}
%
%
%
%
\includegraphics{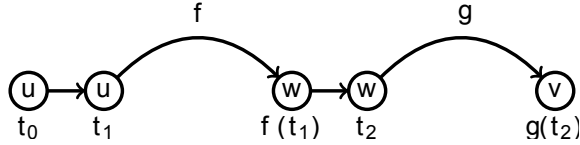}
	\end{center}
	\caption{A journey along sequential links.}
	\label{fig:minCasKol}
\end{figure}

From the begining $t_0$ of the journey, we wait in the node $u$ for some favorable time $t_1$ when we move along the link $(u,w).$ This part of the journey ends at the time $f(t_1).$ Afterwards, we wait for a favorable time $t_2$ in the node  $w.$ At that time, we move along the link $(w,v).$ The journey ends at the time $g(t_2).$ We are interested in the first arrival at the node $v$ if we start at the node $u$ at the time $t_0$ and visit the node $w$ inbetween. That gives us an appropriate semiring multiplication
$$(f \odot g) (t_0) = \min_{t_1\geq t_0 \atop t_2 \geq f(t_1)} g (t_2).$$

We note that if $f$ and $g$ are increasing functions, this equation is equivalent to
$$(f \odot g) (t_0) = g(f(t_0)) = (g \circ f)(t_0).$$
We also point out that the multiplication is not commutative which means that the order in which the links are traversed is important. That is in accordance with our intuition.

When the journey can take us along two parallel links (one possibility is presented in Figure \ref{fig:minCasKolplus}), we start at the time $t_0$ and wait for the time $t_1,$ when it pays to go along the edge for which the arrival times are described with the function $g.$ This journey ends at the time $g(t_1).$ If we wish to take the other link, where the arrival times are described with the function $f,$ we wait for some other time $t_2$ and arrive at $v$ at the time $f(t_2).$ The first arrival time is the smallest of the times  $f(t_2)$ and $g(t_1).$

\begin{figure}[!ht]
	\begin{center}
%
%
%
%
\includegraphics{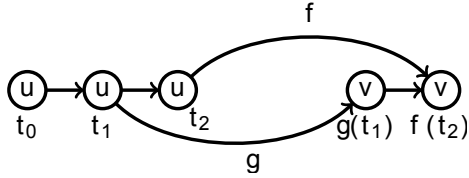}
	\end{center}
	\caption{A journey on parallel links.}
	\label{fig:minCasKolplus}
\end{figure}

That is
$$(f\oplus g)(t_0) = \min_{t_1\geq t_0 \atop t_2 \geq t_0}\big(f(t_2),g(t_1)\big) = \min_{t \geq t_0} \big(f(t),g(t)\big).$$

When $f$ and $g$ are increasing, the equation is equivalent to
$$(f\oplus g)(t_0) = \min (f(t_0),g(t_0)).$$

The two appropriate operations are exactly the ones from the semiring of increasing functions.

Let the values of the temporal quantity $a$ represent the latency along the link. Remember that  $a(t) = \infty$  at times $t \in \Time \setminus T_a.$ We assign a function $f$ to the temporal quantity $a$:
\begin{equation}
\label{eq:delay}
a \mapsto f: \; f(t) = \min_{\tau \geq t} \{\tau + a(\tau)\}.
\end{equation}
The function $f$  is increasing and expanding if  $a \geq 0$ which it usually is as the travel times are nonnegative. If $a$ is describing the latency along the link $(u,v)$ the function $f$  is describing the first arrival time from  $u$ to $v.$ The value $f(t)$ is the first arrival if we begin the journey at the time $t.$

The first arrival times in a temporal network with arbitrary waiting times and given latencies can be computed with the addition and multiplication in the semiring of increasing functions.

\begin{defn}
Let $\mathcal N = (\mathcal V, \mathcal L, \Time, a)$ be a temporal network and let the temporal quantity $a: \Time \to \Time$ describe the latency. We assign a function $f$ to the temporal quantity $a$ as in the equation \eqref{eq:delay}. The semiring
$$\mathbb T = \big(\{f: \Time \to \Time\},\min,\circ,\infty,id\big)$$
is called the \emph{first arrival semiring.}
\end{defn}

\subsubsection{Example temporal network}

We take for the illustration of these principles a simple temporal network with five nodes and seven links, as is shown in Figure \ref{fig:tempNet}. 

\begin{figure}
	\begin{center}
		\includegraphics[width=7cm]{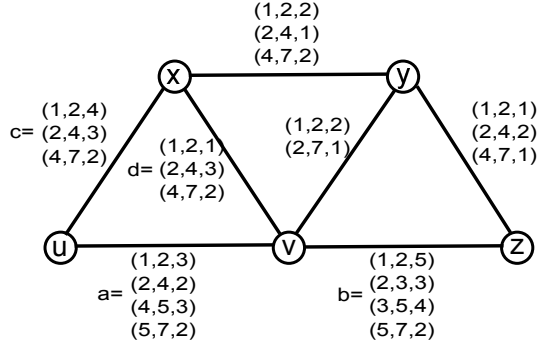}
	\end{center}
	\caption{An example temporal network. The weights on the links are temporal quantities corresponding to latencies/travel times along the links at different times.}
	\label{fig:tempNet}
\end{figure}

In the Python library TQ, we assume $\Time \subset \NN$ and describe temporal quantities in the form $[(s_i,f_i,v_i)],$ which means that on the time interval $[s_i,f_i)$ the value of the temporal quantity is equal to $v_i.$ For a more detailed description see \cite{PraprotnikAMC,TQSON}. In our examples, we use the same notation.

The temporal quantity $a$ is the weight on the edge $\{u,v\}.$ It tells the time needed to get from $u$ to $v$ or from $v$ to $u$ at different time points. Specifically, if we start along the edge $\{u,v\}$ at the time point 1, the time needed to traverse this edge is equal to 3; if we start at times 2 or 3, the time needed is equal to 2; etc.

From the temporal quantity $a,$ we get all possible arrival times along the edge $\{u,v\}$. In the library TQ, we get the results using the function \texttt{arrival}$(a)$ = [4, 4, 5, 7, 7, 8]. From \texttt{arrival}$(b)$ = [6, 5, 7, 8, 7, 8] we see, that it is sometimes better to wait before we begin the journey, as the arrival time with the start at 2 is 5, and the arrival time with the start at 1 is equal to 6. 

From the arrival times, we compute the first arrival times by equation \eqref{eq:delay}. In the library TQ, we use the function \texttt{firstArrival}. In this example, \texttt{firstArrival}$(b)$ = [(1, 3, 5), (3, 6, 7), (6, 7, 8)]. This tells us: if we start the journey at times 1 or 2, the first arrival time is 5; if we start from 3 to 6 (i.e.~at times 3, 4, or 5), the first arrival time is 7; etc.

The operations in the first arrival semiring make sense if the temporal quantities describe the first arrival times.

The sum and the product in the first arrival semiring are implemented as functions \texttt{fAsum} and \texttt{fAprod} in the library TQ. For the example network, we get
\texttt{fAsum} $(\texttt{firstArrival}(a),\texttt{fAprod}(\allowbreak\texttt{firstArrival}(c),\texttt{firstArrival}(d)))$ = [(1, 3, 4), (3, 4, 5), (4, 6, 7)]. This temporal quantity tells us the first arrival times for the journey from $u$ to $v$ if the journey takes us directly over the edge $\{u,v\}$ or across the edges $\{u,x\}$ and $\{x,v\}.$

\subsection{Generalized geodetic semirings}

The generalized geodetic semirings are defined in a very similar way as the geodetic semiring from Section \ref{sec:geodez}.

\begin{defn} \label{def:opPG}
In a set $\Time \times A,$ where $(A,\oplus,\odot,\mathbf 0,\mathbf 1)$ is an arbitrary complete semiring (combinatorial, shortest paths, geodetic, etc.), the operations  \emph{addition} $\boxplus$ and \emph{multiplication} $\boxtimes$ are defined as 
 $$(\tau,a) \boxplus (\sigma,b) = \left(\min(\tau,\sigma), \left\{\begin{array}{ll}
 a, & \tau < \sigma, \\
 a \oplus b, & \tau = \sigma,\\
 b, & \tau > \sigma
 \end{array}\right. \right)$$
 and
 $$(\tau,a) \boxtimes (\sigma,b) = (\tau + \sigma, a \odot b).$$
  \end{defn}
  
 \begin{trd}
 	The set $\Time \times A$ is a semiring for the addition $\boxplus$ and the multiplication $\boxtimes.$  The zero is $(\infty,\mathbf 0)$ and the unit is $(0,\mathbf 1).$
 \end{trd}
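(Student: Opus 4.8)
The plan is to verify each semiring axiom by reducing it to the corresponding property of the underlying complete semiring $(A,\oplus,\odot,\mathbf 0,\mathbf 1)$ together with the structure of $\Time$ as the semiring $(\Time,\min,+)$. The two coordinates interact only mildly: the first coordinate behaves like the shortest-paths/tropical semiring on $\Time$, and the second coordinate carries the values of $A$, with the subtle point being that addition in the second coordinate is \emph{conditional} on the comparison of the first coordinates. So the strategy is to treat the first coordinate by tropical reasoning and the second coordinate by case analysis on which of $\tau<\sigma$, $\tau=\sigma$, $\tau>\sigma$ holds.

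First I would check the additive monoid structure. Commutativity of $\boxplus$ is immediate since $\min$ is commutative and the three-case rule is symmetric under swapping $(\tau,a)$ and $(\sigma,b)$. For the zero, I would compute $(\tau,a)\boxplus(\infty,\mathbf 0)$: since $\tau\le\infty$ with equality only when $\tau=\infty$, the first coordinate is $\min(\tau,\infty)=\tau$ and the second coordinate is $a$ when $\tau<\infty$ and $a\oplus\mathbf 0=a$ when $\tau=\infty$, so $(\infty,\mathbf 0)$ is absorbed correctly. Associativity of $\boxplus$ is the first place requiring care: I would expand $\big((\tau,a)\boxplus(\sigma,b)\big)\boxplus(\rho,c)$ and the other bracketing, noting the first coordinate is $\min(\tau,\sigma,\rho)$ on both sides, and then run through the orderings of $\tau,\sigma,\rho$ to confirm the second coordinate agrees, using associativity and commutativity of $\oplus$ in $A$ to combine exactly those $A$-values whose $\Time$-coordinate attains the minimum.

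Next I would handle the multiplicative monoid: associativity of $\boxtimes$ follows coordinatewise from associativity of $+$ on $\Time$ and of $\odot$ on $A$, and $(0,\mathbf 1)$ is a two-sided unit since $0+\tau=\tau$ and $\mathbf 1\odot a=a\odot\mathbf 1=a$. For distributivity I would verify, say, $(\rho,c)\boxtimes\big((\tau,a)\boxplus(\sigma,b)\big)=\big((\rho,c)\boxtimes(\tau,a)\big)\boxplus\big((\rho,c)\boxtimes(\sigma,b)\big)$ by comparing $\rho+\min(\tau,\sigma)$ with $\min(\rho+\tau,\rho+\sigma)$ in the first coordinate (equal because $+$ is monotone and distributes over $\min$ on $\Time$), and in the second coordinate observing that the comparison $\tau\lessgtr\sigma$ is preserved under adding $\rho$, so the selected case matches, and then invoking distributivity of $\odot$ over $\oplus$ in $A$. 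The symmetric distributive law and the absorption law $(\rho,c)\boxtimes(\infty,\mathbf 0)=(\infty,\mathbf 0)=(\infty,\mathbf 0)\boxtimes(\rho,c)$ (using $\rho+\infty=\infty$ and $c\odot\mathbf 0=\mathbf 0$) are checked the same way.

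I expect the main obstacle to be associativity of $\boxplus$, because the conditional second-coordinate rule forces a genuine case split and one must be careful that when two of $\tau,\sigma,\rho$ tie for the minimum their $A$-values are combined by $\oplus$, while a strictly larger coordinate contributes nothing. The cleanest way to avoid a tedious six-case (or more) enumeration is to reformulate $\boxplus$ abstractly: for a finite family, $\boxplus$ returns $\big(m,\bigoplus_{i:\,\tau_i=m} a_i\big)$ where $m=\min_i\tau_i$. Proving this closed form agrees with the binary definition (a short induction) makes associativity transparent and simultaneously shows the operation is well behaved for countable families, which is exactly what completeness of $A$ plus completeness of $\min$ on $\Time$ delivers; I would remark that this is why $A$ is assumed complete.
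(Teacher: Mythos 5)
Your strategy---direct verification of each axiom, with the closed form $\bigl(m,\bigoplus_{i:\,\tau_i=m}a_i\bigr)$, $m=\min_i\tau_i$, to make associativity of $\boxplus$ transparent---is genuinely different from the paper's proof, which is a one-line appeal to the geodetic-semiring construction of Batagelj (1994) and does no case analysis at all. Your treatment of the additive monoid (commutativity, the zero $(\infty,\mathbf 0)$, associativity via the closed form) and of the multiplicative monoid, unit, and absorbing zero is correct, and the closed-form reformulation is indeed the cleanest way to avoid the case explosion.

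However, your distributivity step contains a false claim, and it sits exactly at the delicate point of this construction. You assert that ``the comparison $\tau\lessgtr\sigma$ is preserved under adding $\rho$.'' This fails when $\rho=\infty$, and $\infty$ cannot be excluded from $\Time$, since the zero of the semiring is $(\infty,\mathbf 0)$. Concretely, take $\tau<\sigma$ and compute
\[
(\infty,\mathbf 1)\boxtimes\bigl((\tau,\mathbf 0)\boxplus(\sigma,\mathbf 1)\bigr)
=(\infty,\mathbf 1)\boxtimes(\tau,\mathbf 0)=(\infty,\mathbf 0),
\]
whereas
\[
\bigl((\infty,\mathbf 1)\boxtimes(\tau,\mathbf 0)\bigr)\boxplus\bigl((\infty,\mathbf 1)\boxtimes(\sigma,\mathbf 1)\bigr)
=(\infty,\mathbf 0)\boxplus(\infty,\mathbf 1)=(\infty,\mathbf 0\oplus\mathbf 1)=(\infty,\mathbf 1).
\]
Adding $\infty$ collapses the strict inequality $\tau<\sigma$ to a tie, so the right-hand side switches to the ``equal'' branch of $\boxplus$ and picks up an extra summand. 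Thus on the full carrier $\Time\times A$ the distributive laws fail for every semiring $A$ with $\mathbf 0\neq\mathbf 1$, so no completion of your argument can succeed as stated. A repair must either restrict the carrier to pairs satisfying $\tau=\infty\Rightarrow a=\mathbf 0$ (precisely the pairs arising from networks, where ``no arrival'' carries value $\mathbf 0$; this subset is closed under $\boxplus$ and $\boxtimes$, finite $\rho$ does preserve strict inequalities, and $\rho=\infty$ then forces $c=\mathbf 0$, making both sides $(\infty,\mathbf 0)$), or identify every $(\infty,a)$ with the zero. The paper never confronts this because it only cites the analogy with the geodetic semiring of Section 3, where the same subtlety is already latent; your more honest, hands-on route is exactly what exposes it.
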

 
 \begin{proof}
 	The construction is almost identical to the one for the geodetic semiring and the semiring properties follow in the same way as in  \cite{Batagelj94semiringsfor} from the properties of the operations in  $\Time$ and $A.$
 \end{proof}
 
 \begin{defn}
The semiring $G_{\Time \times A} = \big(\Time \times A, \boxplus, \boxtimes,(\infty,\mathbf 0), (0,\mathbf 1)\big)$ is called a \emph{generalized geodetic semiring.}
 \end{defn}

\subsection{Traveling semirings}

The next question is how to combine different information on the links. For example, latency and the number of ways to traverse it, or latency and distance.

Let the temporal quantity $a: \Time \to \Time$ describe the latency and let the temporal quantity  $i \in A_\Time$ over a chosen semiring $(A,\oplus,\odot,\mathbf 0,\mathbf 1)$ describe some other information about the link.

We want to compute
$$(f, n)(t) = \left(\min_{\tau \geq t}(\tau + a(\tau)), \bigoplus_{\sigma \; \in \;\mbox{\scriptsize{Argmin}}_{\tau \geq t}(\tau + a(\tau)) \atop \sigma \geq t} i(\sigma)\right).$$

The first component $f$ stays the same as in the first arrival semiring (equation \eqref{eq:delay}) and tells the first arrival along the link after the time $t.$ In the second component, $n,$ we sum (over the chosen semiring $A$) the values along the links on which the minimal arrival time is achieved and that start after the time  $t.$

First, we do a simple transformation

\begin{align*}
(a,i) &\mapsto (a',i) \quad \mbox{ where } \quad
a'(t) = t + a(t)
\end{align*}
from which we get
\begin{equation} \label{eq:transf}
(f, n)(t) = \left(\min_{\tau \geq t}a'(\tau),\bigoplus_{\sigma \; \in \;\mbox{\scriptsize{Argmin}}_{\tau \geq t}a'(\tau) \atop \sigma \geq t} i(\sigma)\right).
\end{equation}

The last equation is simplified by summing over the corresponding generalized geodetic semiring $G_{\Time\times A}.$ The equation \eqref{eq:transf} can be rewritten as

\begin{equation}\label{eq:casGeod}
(f, n)(t)= \mbox{\raisebox{-0.9em}{$\overset{\mathlarger{\mathlarger{\mathlarger{\mathlarger{\boxplus}}}}}{\tau \geq t}$}} \; (a'(\tau),i(\tau)).
\end{equation}
Note that $f \in \mathbb T$ and $n \in A_\Time.$

\subsubsection{Example temporal network} \label{example:countArrival}

Again, we consider the temporal network from Figure \ref{fig:tempNet}. The transformation from the equation \eqref{eq:casGeod} for the temporal combinatorial semiring $A_\Time$ is done with the function $\texttt{countArrival}.$ For example, we get \texttt{countArrival}$(a)$ = [(1, 2, (4, 2)), (2, 3, (4, 1)), (3, 4, (5, 1)), (4, 5, (7, 2)), (5, 6, (7, 1)), (6, 7, (8, 1))] and \texttt{countArrival}$(b)$ = [(1, 3, (5, 1)), (3, 4, (7, 2)), (4, 6, (7, 1)), (6, 7, (8, 1))]. The latter tells us that we can get from $v$ to $z$ at times 1 and 2 soonest at the time 5, and there is 1 possible choice (we begin at 2 and finish at 5); at time 3, the first arrival at $z$ is time 7 and there are two possible choices (begin at 3 and finish at 7 or begin at 5 and finish at 7); etc.

\subsubsection{Operations in traveling semirings}

The transformation \eqref{eq:casGeod} of the temporal quantities $a,$ representing latency, and $i,$ representing some other information, returns a pair $(f, n)$ belonging to the set
$$G_A(\Time) = \left\{
(f, n); \; f \in \mathbb T, \; n \in  A_\Time \right\}.$$
 
 \begin{defn}\label{def:potovalOp}
On a set of function pairs $G_A(\Time)$ we define the \emph{addition} $\diamondplus$ and the \emph{multiplication} $\diamonddot$ with

\begin{align*}
\big((f, n) &\diamondplus (g, m)\big)(t) = (f,  n)(t) \boxplus (g, m)(t), \\
\big((f, n) &\diamonddot (g, m)\big)(t) = \big((g \circ f)(t), n(t) \odot   m (f (t))\big).
\end{align*}
The operation $\boxplus$ is the addition in the generalized geodetic semiring  $G_{\Time \times A}$  and the operation $\odot$ is the multiplication in the semiring $A.$
\end{defn}

The definitions can be read as: If there are two parallel links, we choose the one that arrives first and preserve the same additional value. If both parallel links arrive at the same time, we sum the corresponding additional values.

On sequential links, the arrival time is the same as the arrival over the second link. The journey along the second link can begin after the first arrival along the first link (time $f(t)$). The value of the second component is the value on the first link if we start the journey after the time $t$ multiplied by  the value of the second link if we traverse the link after the time  $f (t).$ 

The first component tells the first arrival and the second component tells additional values for the ubiquitous foremost journey, depending on the semiring $A.$ If $A$ is a combinatorial semiring, the second component tells the number of the ubiquitous foremost journeys. If $A$ is the shortest paths semiring, the second component tells the length of the cheapest among the ubiquitous formost journeys.

\begin{trd}
The set $G_A(\Time)$ is a semiring for the operations from the definition \ref{def:potovalOp}.  The zero is a pair of constant functions $(\infty, \mathbf{0}).$ The unit is $(id,\mathbf 1).$ The second component of the unit is a constant function.
\end{trd}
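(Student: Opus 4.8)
The plan is to verify the four semiring axioms directly, exploiting that $\diamondplus$ is nothing but the pointwise lift of the addition $\boxplus$ of the generalized geodetic semiring $G_{\Time \times A}$, whose semiring structure I may assume. First I would check that the operations keep us inside $G_A(\Time)$: the first component of a sum is $\min(f,g)$, which is again increasing and expanding (a minimum of increasing functions is increasing, and $\min(f(t),g(t)) \geq t$ since each term is), while the first component of a product is $g \circ f$, increasing and expanding exactly as in the closure computation for $\mathbb T$; the second components are arbitrary elements of $A_\Time$ and so stay in $A_\Time$ automatically.

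The additive commutative monoid axioms then come essentially for free: since $\big((f,n)\diamondplus(g,m)\big)(t) = (f(t),n(t)) \boxplus (g(t),m(t))$ is computed pointwise in $G_{\Time \times A}$, associativity, commutativity and the neutrality of the constant pair $(\infty,\mathbf 0)$ are inherited from $\boxplus$, just as temporal quantities inherit theirs from $A$. For the multiplicative monoid I would check the unit by substitution, using $id \circ f = f \circ id = f$ on the first component and $n(t) \odot \mathbf 1 = \mathbf 1 \odot n(t) = n(t)$ (with $\mathbf 1$ the constant unit of $A$) on the second; associativity of $\diamonddot$ reduces to associativity of composition on the first component and, after tracking the arguments, to the common second component $n(t) \odot m(f(t)) \odot p(g(f(t)))$ obtained from both groupings, which is associativity of $\odot$ in $A$. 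The absorbing property of $(\infty,\mathbf 0)$ uses $f(\infty)=\infty$ (expanding) on one side and the constancy of $\infty$ on the other, together with $\mathbf 0$ being absorbing in $A$.

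The main obstacle is distributivity, and within it the right law $\big((f,n)\diamondplus(g,m)\big)\diamonddot(h,p)=\big((f,n)\diamonddot(h,p)\big)\diamondplus\big((g,m)\diamonddot(h,p)\big)$. On the first components the identity I need is $h \circ \min(f,g) = \min(h\circ f, h\circ g)$, which holds because the increasing $h$ commutes with $\min$. The delicate point is the second component: on the left the tie-break inside $\boxplus$ is decided by comparing $f(t)$ with $g(t)$, whereas on the right it is decided by comparing $h(f(t))$ with $h(g(t))$. The left law is the easy half, since left multiplication by $(h,p)$ evaluates the combined second component at $h(t)$, so the two comparisons agree termwise and the tie case closes by distributivity of $\odot$ over $\oplus$ in $A$. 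I expect the real work to be the case $f(t) < g(t)$ but $h(f(t)) = h(g(t))$, a \emph{plateau} of $h$, where the two comparisons disagree and the sides threaten to differ by the surplus summand $m(t) \odot p(g(t))$; this is exactly the failure of optimal substructure for foremost journeys flagged in the introduction. To carry the proof through I would have to show that such plateaus never contribute a discrepancy — most cleanly by establishing that the relevant first-arrival functions are \emph{strictly} increasing (so $f(t)<g(t)\Rightarrow h(f(t))<h(g(t))$ and the comparisons coincide), or else by restricting attention to the compatible pairs $(f,n)$ actually produced by the transformation \eqref{eq:casGeod}. Settling this plateau case is where I expect the proof to stand or fall.
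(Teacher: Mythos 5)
Your proposal takes the same route as the paper's own proof: the additive commutative monoid structure is inherited pointwise from the generalized geodetic semiring $G_{\Time\times A}$, the unit $(id,\mathbf 1)$ and the zero $(\infty,\mathbf 0)$ are checked by direct substitution (using that $f$ is expanding to get $f(\infty)=\infty$), associativity of $\diamonddot$ reduces to associativity of composition and of $\odot$, and the distributive laws are settled by a case analysis on the $\boxplus$ tie-breaking; your remark that the left law is the easy half because both sides compare $f(h(t))$ with $g(h(t))$ is exactly the paper's computation. The one point where you diverge is the point you flagged: for the right law the paper simply asserts that it ``holds, as $f,g$ and $h$ are increasing and the semiring $A$ is distributive.'' That sentence covers the case $f(t)=g(t)$ and the case where $h$ preserves the strict inequality, but it is silent on your plateau case $f(t)<g(t)$, $h(f(t))=h(g(t))$. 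So the paper contains no idea that you are missing.

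Moreover, your suspicion that the proof stands or falls there is justified, and it falls: the plateau case is a genuine counterexample, not a removable difficulty. Take $A$ to be the combinatorial semiring, $n=m=r=\mathbf 1$, $f=id$, $g=id+1$, and $h(t)=\max(t,10)$; all three first components are increasing, expanding, honest first-arrival functions of the form \eqref{eq:delay} (take latencies $a\equiv 0$, $a\equiv 1$, and $a(\tau)=\max(10-\tau,0)$ respectively). At $t=0$ we have $f(0)=0<1=g(0)$ but $h(f(0))=h(g(0))=10$, so $\bigl(((f,n)\diamondplus(g,m))\diamonddot(h,r)\bigr)(0)=(10,1)$, while $\bigl(((f,n)\diamonddot(h,r))\diamondplus((g,m)\diamonddot(h,r))\bigr)(0)=(10,1+1)=(10,2)$. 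Neither of your proposed repairs can close this: first-arrival functions obtained from \eqref{eq:delay} are step functions, so requiring strict monotonicity excludes precisely the intended objects, and the counterexample already lives inside the set of pairs produced by the transformation \eqref{eq:casGeod}. Indeed the paper itself exhibits the failure informally: the discussion of Figure \ref{fig:casGeod}, where the semiring computes $k\cdot n$ ubiquitous foremost journeys although there are $(m+k)\cdot n$ foremost journeys, is exactly the statement that summing parallel alternatives before composing gives a different second component than composing first and summing afterwards. So what Definition \ref{def:potovalOp} defines satisfies every axiom except right distributivity (a one-sided, ``left'' semiring); the theorem as stated is false, the paper's proof has a gap exactly where you predicted, and as a consequence the associativity of the matrix product over $G_A(\Time)$ --- and hence the closure computation underlying the betweenness --- would need separate justification or a weakened statement.
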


\begin{proof}
	The associativity, commutativity, and the neutral element for the addition follow from the properties of the generalized geodetic semiring.

First, we show that $(id,\mathbf 1)$ is the unit
\begin{align*}
\big(( f,  n)\diamonddot (id,\mathbf 1)\big)(t) &= \big( f(t),  n(t) \odot \mathbf 1\big) = ( f, n)(t),\\
\big((id,\mathbf 1)\diamonddot ( f,  n)\big)(t) &= \big( f(t),\mathbf 1 \odot  n(t)\big) = ( f,  n)(t).
\end{align*}
and that $(\infty, \mathbf{0})$ is the zero
\begin{align*}
\big(( f,  n) \diamonddot (\infty, \mathbf 0)\big)(t) &= (\infty,  n(t) \odot \mathbf 0) = (\infty, \mathbf 0), \\
\big((\infty,\mathbf 0) \diamonddot (f, n)\big)(t) &= ( f(\infty), \mathbf 0\odot n(\infty)\big) = (\infty,\mathbf 0), \mbox{ because $ f$ is expanding.}
\end{align*}

Now check the multiplication associativity and the distributivity.
First the associativity:
{\small
\begin{align*}
\big(\big(( f, n)\diamonddot ( g, m)\big)\diamonddot ( h,  r)\big)(t) &= \big(( g \circ  f)(t), n(t) \odot  m ( f(t))\big) \diamonddot ( h(t), r(t))\\
&=\big(( h\circ  g \circ  f)(t),  n(t) \odot  m( f(t))\odot  r(( g \circ  f )(t))\big) \\
\big(( f, n)\diamonddot \big(( g, m)\diamonddot ( h,  r)\big)\big)(t) &= ( f, n)(t) \diamonddot \big(( h \circ  g)(t), m(t) \odot  r( g(t))\big)\\
&= \big(( h \circ  g \circ  f)(t),  n(t) \odot  m( f(t))\odot  r( g ( f (t)))\big).
\end{align*}}
We get the same result in both cases, therefore the associativity holds. Check for distributivity:
\begin{align*}
(( h, r) \diamonddot ( f,  n))(t) = (( f \circ  h)(t)&, r(t) \odot  n( h(t))),\\
(( h, r) \diamonddot ( g,  m))(t) = (( g \circ  h)(t)&, r(t) \odot  m( h(t)))
\end{align*}
and
$$
(( h, r) \diamonddot ( f,  n) \diamondplus ( h, r) \diamonddot ( g,  m))(t)  = $$
$$\left(\min ( f( h (t)), g ( h(t))),
\left\{ \begin{array}{l l}
 r(t) \odot  n( h(t)), &  f ( h (t)) <  g ( h (t)) \\
 r(t) \odot ( n ( h(t))\oplus m( h(t))), &  f ( h (t)) =  g ( h(t)) \\
 r(t) \odot  m( h(t)), &  f ( h (t)) >  g ( h (t)) \\
\end{array}\right.\right).$$
We used the distributivity of the semiring $A.$
The other side of the distributivity equation gives
$$(( f,  n) \diamondplus ( g,  m))(t) = \left(\min( f(t), g(t)),\left\{\begin{array}{ll} 
 n(t),& f(t) <  g(t)\\
( n \oplus  m)(t),&  f(t) =  g(t)\\
 m(t), &  f(t) >  g(t)
\end{array}\right.\right),$$
which we multiply from the left
$( h,  r)(t) \diamonddot$ and get
$$\left((\min( f,  g)\circ  h)(t), r (t) \odot \left\{\begin{array}{ll} 
 n( h(t)),& f( h(t)) <  g( h(t))\\
( n \oplus  m)( h(t)),&  f( h(t)) =  g( h(t))\\
 m( h(t)), &  f( h(t)) >  g( h(t))
\end{array}\right.\right).$$
So the left distributivity holds. If we multiply $(( f,  n) \diamondplus ( g,  m))(t)$ on the right hand side $\diamonddot ( h,  r)(t)$ we get
$$\left(( h \circ \min( f,  g))(t),\left\{\begin{array}{ll} 
 n(t),& f(t) <  g(t)\\
( n \oplus  m)(t),&  f(t) =  g(t)\\
 m(t), &  f(t) >  g(t)
\end{array}\right. \odot  r(\min( f(t), g(t)))\right),$$
which is the same as the results of the following computations
\begin{align*}
(( f,  n) \diamonddot( h,  r))(t) &= (( h \circ  f)(t), n(t) \odot  r( f(t))),\\
(( g,  m) \diamonddot( h,  r))(t) &= (( h \circ  g)(t), m(t) \odot  r( g(t))),
\end{align*}
which adds with $\diamondplus$ to
{\small
$$\left(\min( h ( f(t)), h( g(t))),\left\{\begin{array}{ll}
 n(t) \odot  r( f(t)),& h( f(t)) <  h ( g(t)),\\
 n(t) \odot  r( f(t)) \oplus  m(t)\odot  r( g(t)),& h( f(t)) =  h ( g(t)),\\
 m(t) \odot  r( g(t)),& h( f(t)) >  h( g(t))
\end{array}\right. \right).$$}
The right distributivity holds, as  $f,g$ and $h$ are increasing and the semiring $A$ is distributive.

The distributivity holds and $G_A(\Time)$ is a semiring.
\end{proof}

\begin{defn}
	Let $A$ be a combinatorial (shortest paths, geodetic, etc.) semiring. The semiring
	$$\left(G_A(\Time), \diamondplus, \diamonddot, (\infty, \mathbf{0}), (id,\mathbf 1)\right)$$	
	is called the \emph{traveling combinatorial (shortest paths, geodetic, etc.)} semiring.
\end{defn}

\subsubsection{Example temporal network} \label{example:tCop}

We continue the example from Figure \ref{fig:tempNet}. The traveling combinatorial semiring operations are implemented as functions \texttt{tCsum} and \texttt{tCprod}. Both operations are used with temporal quantities with values corresponding to pairs (first arrival time, number of possible ways of first arrivals) which we get from latencies with the function \texttt{countArrival}, as was shown in Section \ref{example:countArrival}. The results are \texttt{tCsum} $(\texttt{count}\allowbreak \texttt{Arrival}(a), \texttt{countArrival}(b))$ = [(1, 2, (4, 2)), (2, 3, (4, 1)), (3, 4, (5, 1)), (4, 5, (7, 3)), (5, 6, (7, 2)), (6, 7, (8, 2))] and
\texttt{tCprod}$(\texttt{countArrival}(a), \allowbreak \texttt{count}\allowbreak\texttt{Arrival}(b))$ = [(1, 2, (7, 2)), (2, 4, (7, 1))]. The latter tells us, that to get from $u$ to $z$ via $v$, the first arrival time is 7, and that if we begin the journey at time 1 there are 2 possibilities, if we start at times 2 or 3, there is one possible journey. If we begin the journey later, there is no way to get to $z$ during the network lifetime.

In more detail, at the time 1 there are two ubiquitous foremost journeys. First, at the time 1 we start at $u$ along the edge $\{u,v\}$ which takes us 3 time units. We arrive at $v$ at the time 4 and wait till 5 to cross the edge $\{v,z\}.$ This takes us 2 time units. The arrival time at $z$ is 7. The second ubiquitous foremost journey is, we start at the time 2 in $u$ along the edge $\{u,v\}$ which takes us 2 time units. We arrive at $v$ at the time 4 and wait till 5 to cross the edge $\{v,z\}.$ This takes us 2 time units and we arrive at $z$ at 7. 

Note that there is a third possibility for a foremost journey: We take the edge $\{u,v\}$ at the time 3 which takes 2 time units and arrive at $v$ at the time 5. We cross the edge $\{v,z\}$ at 5 and again arrive at $z$ at the time 7. This is not an ubiquitous foremost journey from $u$ to $z$ because the stage from $u$ to $v$ is not a foremost journey as it does not finish at the time 4.

We compute the result for two other possible routes from $u$ to $z$: $u \to v \to y \to z$ gets us [(1, 2, (6, 2)), (2, 3, (6, 1)), (3, 4, (7, 1))], that is at the time 1 there are two possible ubiquitous foremost journeys with arrival time 6; at time 2 there is one such journey, and at time 3 there is one journey that arrives at the time 7.

The second journey is $u \to x \to v \to z$ which gets us []. That means that there is no way to take this route and finish in the network lifetime.

In temporal networks, it is not generally true that the foremost journey includes only foremost stages which holds for shortest paths in static networks. See Figure \ref{fig:casGeod} as an example. The weights on links are the latencies and the number of ways to cross them. The latency on the link $(u,v)$ is 2 at the time point 1 and  3 at the time point 2. Between the nodes $v$ and $w$ the latency is equal to 2 at the time point 5. Outside the specified times the links are not present.

There are $k$ foremost journeys between the nodes $u$ and $v$ that have the arrival time 3. Between the nodes  $v$ and $w$ there are $n$ foremost journeys. Between the nodes $u$ and $w$ there are $(m+k)\cdot n$ foremost journeys. Our intuition does not distinguish between waiting in the node $v$ and traveling along a link. The traveling semiring does. The link $(u,v)$ with the weight $(3,m)$ is not taken into account in the semiring as it is not included among the ubiquitous foremost journeys between $u$ and $w.$ We pointed out this shortcoming in the example above.

\begin{figure}[!ht]
	\begin{center}
		\includegraphics{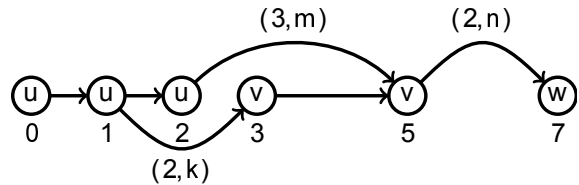}
	\end{center}
	\caption{The foremost journey does not necessarily include only the foremost stages.}
	\label{fig:casGeod}
\end{figure}

\section{Betweenness centrality} \label{sec:between}

Determining important nodes in the network is one of the basic network analysis tools. A lot of different node centralities have been defined for static networks \cite{wasserman1994social}. One of the classical centralitity measures is the betweenness centrality \cite{freeman77,Freeman78centralityin}.

\begin{defn} 
	The \emph{betweenness} of a node $v$ in a network $\mathcal N = (\mathcal V, \mathcal L, \mathcal W)$ is defined with
	$$  b(v) = \frac{1}{(n-1)(n-2)} \sum_{u,w \in \mathcal V \atop |\{v,u,w\}| = 3}
	\frac{n_{uw}(v)}{n_{uw}},$$
	where $n_{uw}$ is the number of the shortest paths from $u$ to $w$ and $n_{uw}(v)$ is the number of the shortest paths from $u$ to $w$ that include the node $v.$ If $n_{uw} = 0$ we define $n_{uw}(v)/n_{uw} = 0.$
\end{defn}

The betweenness centrality is based on the shortest paths in the network. The ratio $n_{uw}(v)/n_{uw}$ can be seen as the probability that the communication between $u$ and $w$ goes through  $v.$ Therefore, the betweenness centrality implicitly assumes that all the communication between the nodes of the network takes place only along the shortest paths. That is not necessarily the case and it is a known disadvantage of the betweenness centrality.

%
%

Another possible interpretation of the betweenness centrality of a node $v$ is: Is the difference in the number of shortest paths between pairs of nodes in the network if we exclude the node $v$ from the network big or small? If there is a small change in the number of paths, the node $v$ is not important.

The betweenness centrality is motivated by network traffic monitoring. Which node has the most potential for influencing, security, connectivity, negotiations. It measures the strategic position of nodes.

In \cite{TQSON}, we described the generalization of the betweenness centrality for temporal networks with zero latency. In this article, we aim to generalize it to networks with given latencies and arbitrary waiting times.

\subsection{First arrival betweenness in temporal networks}

We will use the traveling combinatorial semiring $G_A(\Time)$ to define and compute the betweenness in temporal networks. In this semiring, the  pairs of temporal quantities $(f,m)$ are viewed as the first arrival times, $f,$ and as the number of possible traversals of links that result in the first arrival, $m$.

\begin{defn}
	We define the \emph{first arrival betweenness} with respect to the ubiquitous foremost journeys after the chosen time point  $t$ as
$$ \mathbf b_v(t) = \frac{1}{(n-1)(n-2)} \sum_{u,w \in \mathcal  V \atop |\{v,u,w\}| = 3}
\frac{n_{uw(v)}(t)}{n_{uw}(t)}.$$
The $n_{uw}(t)$ denotes the number of ubiquitous foremost journeys from  $u$ to $w$ that begin after the time $t$ and the $n_{uw(v)}(t)$ denotes the number of ubiquitous foremost journeys from $u$ to $w$ that go through $v$ and begin after the time $t.$ If $n_{uw}(t) = 0,$ we omit the corresponding term.
\end{defn}

We point out that this definition has the same problem as the betweenness for static network. It assumes that all the communication / traffic in the temporal network travels along the ubiquitous foremost journeys.

There is another shortcoming to this definition: If the presence of links is sparse, meaning that the links are present only at very few time points, the probability of different journeys having the same finish time is very small. In this case, the betweenness of the nodes is almost always equal to zero as there are very few foremost journeys that end at the same time. For example, think of the network of bus schedules: the time that is needed to get from A to B is rarely the same as the time needed to get from A to B through C. 

We compute the values $n_{uw}(t)$ and $n_{uw(v)}(t)$ from the closure  $\mathbf B$ of a temporal  network matrix over the traveling combinatorial semiring in a similar way as for the static case. The matrix $\mathbf B$ consists of temporal quantities with values of pairs $\big(f_{uv}(t),n_{uv}(t)\big).$ The value $f_{uv}(t)$ is the first arrival time for journeys from $u$ to $v$ with the begining  $t.$ The value $n_{uv}(t)$ tells the number of the ubiquitous foremost journeys begining at the time $t,$ starting at $u,$ and arriving at $v$ at the time $f_{uv}(t).$ 

Once we know the matrix $\mathbf B,$ we compute 
$$n_{uw(v)}(t) = n_{uv}(t) \cdot n_{vw}(f_{uv}(t))$$
if $f_{uw}(t) = f_{vw}(f_{uv}(t)).$ Otherwise $n_{uw(v)}(t)$ is equal to $(\infty, 0).$

\subsubsection{Example temporal network}

To compute the first arrival betweenness, we first implemented the appropriate closure (function \texttt{tempClosure}), and used the traveling semiring operations. The first arrival betweenness operation is implemented as the function \texttt{tempBetween}. For the example network in Figure \ref{fig:tempNet}, the result is written in Table \ref{tab:FAB}.

\begin{table}[!ht] 
	\begin{center}
\begin{tabular}{|r|l|}
	\hline
	$u$ & [] \\
	$v$ & [(1, 2, 0.4802), (2, 3, 0.33332), (3, 4, 0.25), (4, 5, 0.1667), (5, 6, 0.0833)]\\
	$z$ & []\\
	$y$ & [(1, 2, 0.4762), (2, 3, 0.5556), (3, 4, 0.5694), (4, 5, 0.4167), (5, 6, 0.1667)]\\
	$x$ & [(1, 2, 0.0516), (3, 5, 0.0833)]\\
	\hline
\end{tabular}
\caption{First arrival betweenness for the temporal network in Figure \ref{fig:tempNet}.}
\label{tab:FAB}
\end{center}
\end{table} 

The results tell us that the nodes $u$ and $z$ are not important fot the ubiquitous foremost journeys. Throughout the lifetime of the network, the most important nodes are $v$ and $y$, but the relative importance changes. This is logical if we look at the network, as the latencies on the edges adjacent to these two nodes are smaller than the latencies of the edges adjacent to $u$ and $z.$ We also note that the sum of the values is not equal to 1 at later times, as there are not a lot of possible foremost journeys as we approach the network lifetime, and the normalization factor stays the same.

\begin{figure}[!ht]
	\begin{center}
		\includegraphics[width=6cm]{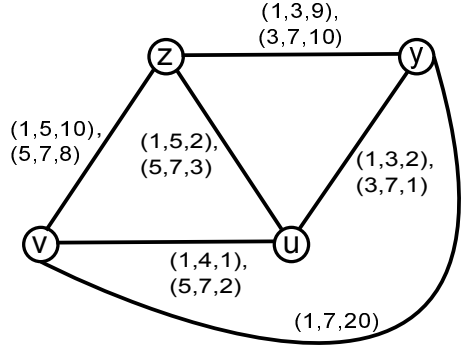}
	\end{center}
	\caption{A second example of the temporal network.}
	\label{fig:tempNet2}
\end{figure}

We compute the first arrival betweenness also for the network in Figure \ref{fig:tempNet2}. It is obvious by the choice of the latencies that the node $u$ is the most important for foremost journeys. We get the result which corroborates this intuition. First arrival betweenness of the node $u$ is equal to [(1, 4, 1.0), (4, 5, 0.5), (5, 6, 0.1667)]. All other nodes have a temporal betweenness of []. So they are not important with respect to ubiquitous foremost journeys.

Again, note the problem with the normalization factor. For static networks, the normalization factor is chosen as the number of all possible paths. But this number is hard to compute for temporal networks. For now, we leave it the same as for static networks, but in the future, it seems that the normalization factor should also depend on time.

\subsection{First arrival betweenness with exclusion}

As we mentioned above, the first arrival betweenness is not a very good indicator of the importance of nodes when the link presence is sparse. For these networks, we propose a definition of importance that is not so strict and tells the node's importance for the whole network lifetime.

\begin{defn}
	The \emph{first arrival betweenness with exclusion} of the node $v$ with respect to the ubiquitous foremost journeys is defined as
	$$ \mathbf b_ {\mathrm{excl}}(v) = \frac{1}{(n-1)(n-2)}\; \sum_{u,w \in \mathcal  V \atop |\{v,u,w\}| = 3} \frac{n_{uw}(v)[\mathcal N]}{n_{uw}(v)[\mathcal N] + n_{uw}[\mathcal N \backslash \{v\}]}.$$
	If $n_{uw}(v)[\mathcal N] + n_{uw}[\mathcal N \backslash \{v\}] = 0,$ the corresponding term is omitted in the computation.
	  
	The $n_{uw}(v)[\mathcal N]$ denotes the number of ubiquitous foremost journeys in the network $\mathcal N$ from node $u$ to node $w$ that include the node $v.$ The $n_{uw}[\mathcal N\backslash \{v\}]$ denotes the number of ubiquitous foremost journeys from $u$ to $w$ in the network $\mathcal N\backslash \{v\}.$  
\end{defn}

The idea behind this definition is simple. We determine the number of ubiquitous foremost journeys from $u$ to $w$ that exist in the network $\mathcal N \backslash \{v\},$ that is the number $n_{uw}[\mathcal N \backslash \{v\}].$ We add the node $v$ to this network (resulting in the network $\mathcal N$) and count the number of ``new'' ubiquitous foremost journeys, i.e.~the journeys that go from $u$ to $w$ through $v,$ denoted with $n_{uw}(v)[\mathcal N].$ Note that these ``new'' journeys can be faster than the old ones, which means that possibly some of the journeys counted in $n_{uw}[\mathcal N \backslash \{v\}]$ are not ubiquitous foremost journeys for the network $\mathcal N.$ This is one of the main differences between the definitions of the first arrival betweenness and the first arrival betweennes with exclusion.

Another big difference is that the first arrival betweenness is a temporal quantity and tells how the node's importance changes through time and the first arrival betweenness with exclusion is a time independent measure.

When the node $v$ is important with respect to ubiquitous foremost journeys in $\mathcal N,$ the value of $n_{uw}(v)[\mathcal N]$ is large and the value of $n_{uw}[\mathcal N \backslash \{v\}]$ is small. This means that there are a lot of journeys through $v$ and few journeys that take other routes. If all the possible routes include $v,$ the ratio for a combination of three different nodes is equal to 1.

The normalization factor is determined in the same way as for static networks. There is no problems with it because the first arrival betweenness with exclusion is not a temporal quantity. Therefore, the factor depends only on the size of the network.

Note that the values of the first arrival betweenness with exclusion are between 0 and 1. A high value of $\mathbf b_{\mathrm{excl}}(v)$ means that the node $v$ is important. In the next Section, we give some numeric examples.

\subsubsection{Examples of the first arrival betweenness with exclusion} \label{sec:FABeEx}

We start with a detailed description of the first arrival betweenness with exclusion on the sparse link presence network with 3 nodes and 3 links that is drawn in Figure \ref{fig:FABexExample1}. The weights on the links represent time points (written in TQ notation) and latency. For example, the edge $\{u,v\}$ is present at the time 2 when it takes 3 time units to cross it, and at the time 7 when it takes 2 time units to cross it.

\begin{figure}[!ht] 
	\begin{center}		
		\includegraphics[width=5cm]{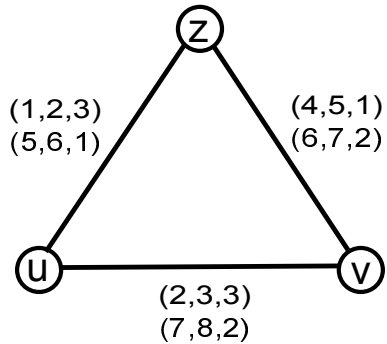}
	\end{center}
	\caption{The  first arrival betweenness with exclusion in a simple network.}
	\label{fig:FABexExample1}
\end{figure}

Because the network is small, we can examine all the possible journeys by hand. The results are written in Table \ref{tab:FABexE1}.

\begin{table}[!ht] 
	\begin{center}
		\begin{tabular}{|c|c||c||c|c|}
			\hline
			\multicolumn{2}{|l||}{Journeys from $u$ to $v$} & &\multicolumn{2}{l|}{Journeys from $v$ to $u$}\\
			\hline
			start & finish & & start & finish \\
			\hline
			2 & 5 & direct &			2 & 5 	\\
			7 & 9 & journeys &			7 & 9   \\	
			\hline
			1 & 5 & going &		4 & 6 \\
			5 & 8 & through $z$ &					& \\	
			\hline
			\multicolumn{5}{l}{\rule{0pt}{-4mm}}\\
			\hline
			\multicolumn{2}{|l||}{Journeys from $u$ to $z$} & &\multicolumn{2}{l|}{Journeys from $z$ to $u$}\\
			\hline
			start & finish & & start & finish \\
			\hline
			1 & 4 & direct &			1 & 4 	\\
			5 & 6 & journeys &			5 & 6   \\	
			\hline
			2 & 8 & going &				4 & 9 \\
			1 & 8 & through $v$ &		& \\	
			\hline
			\multicolumn{5}{l}{\rule{0pt}{-4mm}}\\
			\hline
			\multicolumn{2}{|l||}{Journeys from $v$ to $z$} & &\multicolumn{2}{l|}{Journeys from $z$ to $v$}\\
			\hline
			start & finish & & start & finish \\
			\hline
			4 & 5 & direct &			4 & 5 	\\
			6 & 8 & journeys &			6 & 8   \\	
			\hline
			2 & 6 & going &				1 & 8 \\
			 &  & through $u$ &	5	& 9 \\	
			\hline
		\end{tabular}
		\caption{Possible journeys in the temporal network from Figure \ref{fig:FABexExample1}.}
		\label{tab:FABexE1}
	\end{center}
\end{table}

First, we compute the importance of $z$ for the journeys from $u$ to $v.$ There are 2 ubiquitous foremost journeys in $\mathcal N \backslash \{z\}$ and there are another 2 ubiquitous foremost journeys when we add $z.$ The importance of $z$ for the journeys from $u$ to $v$ is equal to $\frac{2}{4}.$

Now, we look at the journeys from $v$ to $u.$ There are 2 ubiquitous foremost journeys in $\mathcal N \backslash \{z\}$ and there is one more ubiquitous foremost journey when we add $z.$ The importance of $z$ for the journeys from $v$ to $u$ is equal to $\frac{1}{3}.$

The first arrival betweenness with exclusion of the node $z$ is equal to
$$\mathbf b_ {\mathrm{excl}}(z) = \frac{1}{2 \cdot 1} \left(\frac{2}{4} + \frac{1}{3}\right) =\frac{5}{12}.$$

We compute the values of the other two nodes in the same way. Looking at the Table \ref{tab:FABexE1}, we get
\begin{align*}
\mathbf b_ {\mathrm{excl}}(v) &= \frac{1}{2 \cdot 1} \left(\frac{2}{4} + \frac{1}{3}\right) =\frac{5}{12} \quad \mbox{and} \quad 
\mathbf b_ {\mathrm{excl}}(u) = \frac{1}{2 \cdot 1} \left(\frac{1}{3} + \frac{2}{4}\right) =\frac{5}{12}.
\end{align*}

In this example, all the nodes are equally important. That is not surprising as the network is a triangle and the weights are very similar. The first arrival betweenness (without exclusion) is non-zero only for the node $z$ and is equal to $[(1, 2, 0.25), (3, 5,\allowbreak 1.0),\allowbreak (5, 6, 0.5)].$

We implemented this procedure in the library TQ. We compute the first arrival betweenness with exclusion using the function \texttt{betweenTimeEx}.

We test this function on two other small examples in which the underlying graph is a star with 6 nodes (Figure \ref{fig:FABexExample2}). In the first example, the weights of all the links are equal to $(1,10,5).$ In this case, the value of the first arrival betweenness with exclusion of the node $u$ is equal to 1. All other nodes have the value equal to 0. We get the same result without exclusion: The only existing value is the value of the node $u$ which is equal to $[(1, 5, 1.0)].$

In the second example, the weights on the links are as in Figure \ref{fig:FABexExample2}. In this case, the node $u$ has the value of the first arrival betweenness with exclusion of 0.35. The node $v$ has a value of 0.0625. And the node $z$ has a value of 0.0417. Other nodes have the value 0. This happens because there are very few journeys available in this network. For the same reason the sum of all the values is not equal to one. Note that the center of the star still gets the highest value of the first arrival betweenness with exclusion and that the temporal dimension changes the results significally. Vertices $v$ and $z$ are more important than the other 3 periferal nodes because there exist journeys $u \to v \to u$ and $u \to z \to u.$ An application of this would be: if we are waiting for a plane at the site $u,$ can we get to the site $v$ and back before the plane leaves or not? If we can, the site $v$ is more important than the site we cannot visit.

\begin{figure}[!ht] 
	\begin{center}		
		\includegraphics[width=5cm]{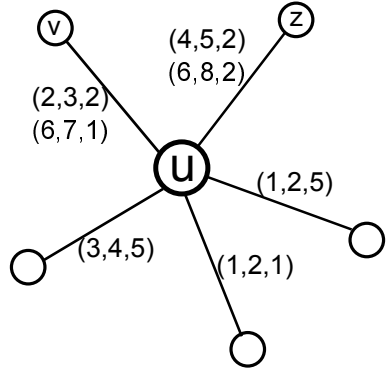}
	\end{center}
	\caption{The  first arrival betweenness with exclusion in a star network.}
	\label{fig:FABexExample2}
\end{figure}

We also list the results of the first arrival betweenness without exclusion for this star network. For the node $u$ we have $[(1, 2, 0.5325), (2, 3, 0.2308), (3, 4, 0.1812), (4, 5, \allowbreak 0.1943), (5, 6, 0.0875),  (6,\allowbreak 7, 0.1)],$ for the node $v$ we have $[(1, 2, 0.0375), (2, \allowbreak 3, \linebreak 0.05513),\allowbreak (3, 4, 0.03125), (4, 5, 0.04286),\allowbreak (5, 6,\allowbreak 0.0375), (6, 7, 0.05)],$ and for the node $z,$ the result is $[(1, 2, 0.02643), (2, 3, 0.03846), (3, 4, 0.025), \allowbreak (4, 5,\allowbreak 0.04286)].$ All the other nodes betweennes is constantly zero. These results also show that $u$ is the most important node in this network. Its importance diminishes when we approach the network lifetime. The other two important nodes are $v$ and $z$ and their importance is very low. This is also in accordance with the results of the betweenness with exclusion.

The main differences between the two definitions are (a) for sparse link presence networks, the betweenness without exclusion is rarely non-zero and therefore not viable, and (b) the betweenness without exclusion gives temporal results which show some changes in importance through time and give a less distinct sense of node importance for the whole lifetime. If we are interested in the overall importance, the betweenness with exclusion is the more suitable one.

\subsection{The importance of selected bus stops in Ljubljana, Slovenia}

From the bus schedules for Ljubljana, Slovenia, we created a temporal network. Because there is a lot of data for the entire city, we chose only a part of the whole network that we know well. This subnetwork consists of 25 bus stops and represents the bus schedule for the selected routes going in one direction from 8 a.m.~untill noon. We chose the routes we know well in order to compare the results with our personal experience.

The results we got by computing the first arrival betweenness with exclusion were in accordance with our intuition -- the least important nodes of the bus network were the last stops of each line. The most important bus stops were the ones where a few lines come together to the same road. The numerical results of the first arrival betweenness are depicted in Figure \ref{fig:LPP}. All the links are directed and are pointing right / down. The nodes of the network are numbered and the numbers next to the nodes represent the values of the first arrival betweenness with exclusions. The nodes without numbers next to them have the first arrival betweenness with exclusions equal to 0.

\begin{figure}[!ht] 
	\begin{center}		
		\includegraphics[width=0.9\textwidth]{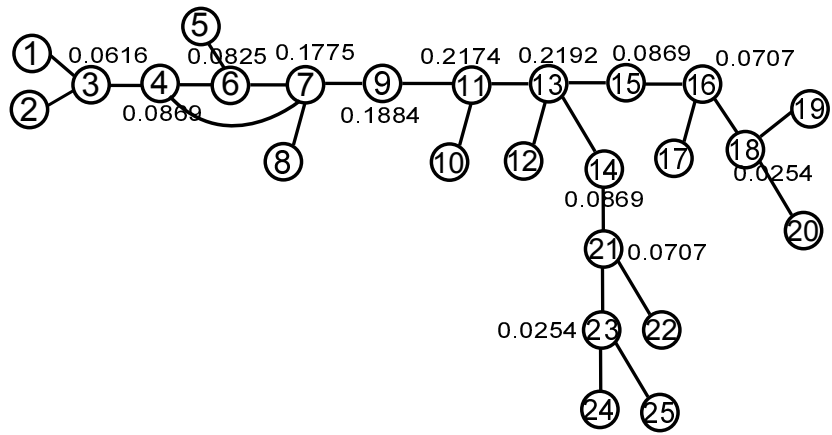}
	\end{center}
	\caption{The  first arrival betweenness with exclusion on a part of the bus schedule net\-work.}
	\label{fig:LPP}
\end{figure}

The nodes that receive the highest values of the first arrival betweenness with exclusions are ``in the middle'' of the network as we can see from Figure \ref{fig:LPP}. We expected such a result from our experience. The nodes with the highest values are nodes 13 (value 0.2192), 11 (value 0.2174), 9 (value 0.1884), and 7 (value 0.1775). All other nodes have a very small or zero value of betweenness. Note that one could expect the values of nodes 16 or 18 to be higer. They are low because it takes quite a long time to get from node 18 to node 13.

We also computed the first arrival betweenness which gave similar but longer results, which are omitted for the sake of brevity.

\section{Conclusion and future work} \label{sec:conclusion}

In the article, we described a new algebraic approach to the analysis of temporal networks that is based on temporal quantities over the selected semiring. We defined a new semiring for computing foremost journeys (first arrival semiring) and traveling semirings in which we can use additional data on the links, besides the latency. 

Our description of a temporal network avoids an explicit record of node and link presence as it is done in most of the literature. We describe the absence implicitly using the zero in the semiring. Our approach allows a wider variety of temporal data to be added to the nodes and to the links of the network. In addition to the latency, it is possible to add lengths, number of ways, and other temporal information. With the definition of the traveling semiring, we can mathematically describe journeys in temporal networks and allow more data in their analysis.

The procedures for the analysis of temporal networks with zero latency and zero waiting times from our articles \cite{TQSON,PraprotnikAMC} and the procedures used in this article are available as a Python library TQ (Temporal Quantities) at \\ \url{http://vladowiki.fmf.uni-lj.si/doku.php?id=tq}. 

We defined two betweenness centralities with respect to the ubiquitous foremost journeys in temporal networks, and showed how to use the semiring operations to compute them. We extended the library TQ to include these routines and tested it on a few examples. We get the results we expected from our knowledge of the real network.

For future research, other methods from static networks could be generalized and special methods that are adapted to the time dimension should be developed. Also, the definition of betweenness could be generalized or adapted in another way that would be more suitable for different data. It seems that the traveling semiring could be used for many different purposes. A new semiring could be constructed that could take into account all the foremost journeys. The normalization factor for the first arrival betweenness should be improved.

There are still questions about the journeys with zero or fixed waiting times. Both cases raise some interesting questions. The fixed time is a very strong assumption and it will be difficult to solve. If a semiring could be constructed for this case, the temporal ``shortest path'' problem could be solved by the matrix closure operation in polynomial time. Since shortest path problem with zero waiting times is NP-hard, this would be highly unlikely.

In the future, we intend to extend the library TQ and provide a better, friendlier version of the program so that it could be used by other researchers.

\section*{Funding}

This work was supported in part by the ARRS, Slovenia, research program P1-0294 and research projects J5-5537 and J1-5433, as well as by a grant within the EURO-CORES Programme EUROGIGA (project GReGAS) of the European Science Foundation.

\bibliographystyle{plain}
\bibliography{semiring}

\end{document}